\newtheorem{problem}{Problem}
\newcommand{\tdagger}{\textsuperscript{\textdagger}}
\DeclareMathOperator{\OPT}{OPT}
\title{On the Complexity of BWT-runs Minimization\\ via  
Alphabet Reordering} %TODO Please add
\titlerunning{On the Complexity of BWT-runs Minimization via  
Alphabet Reordering
}
\author{Jason Bentley}{Department of Mathematics, University of Central Florida, %[optional: Address], 
USA
%and My second affiliation, Country \and
%\url{} 
}
{jason.bentley@ucf.edu}
{}
\author{Daniel Gibney}{Department of Computer Science, University of Central Florida, %[optional: Address], 
USA 
%and My second affiliation, Country \and
\url{https://www.cs.ucf.edu/~dgibney/} }{daniel.j.gibney@gmail.com}
{}
\author{Sharma V. Thankachan}
\authorrunning{J.\,B. Bentley, D. Gibney, and S.\,V. Thankachan} %TODO mandatory. First: Use abbreviated first/middle names. Second (only in severe cases): Use first author plus 'et al.'
\keywords{BWT, Wheeler Graphs, NP-hardness, APX-hardness} %TODO mandatory; please add comma-separated list of keywords
\begin{document}
\maketitle
\begin{abstract}
%  The field of modern data compression has been in development since the 1940's with the advent of Information Theory. Since then it has come to encompass a wide variety of techniques, almost all of which have been the subject of extensive research. Despite this level of effort, our theoretical understanding of the relationships between many of these compression techniques is only now beginning to blossom. 
% However, storing the data is only half of the battle. We also need to be able to perform computations over this compressed data quickly in order for it to be effectively utilized. In this respect
The Burrows-Wheeler Transform (BWT) has been an essential tool 
in text compression and indexing. 
First introduced in 1994, it went on to provide the backbone for the first encoding of the classic suffix tree data structure in space close to entropy based lower bound. Within the last decade it has seen its role further enhanced with the development of compact suffix trees in space proportional to ``$r$'', the number of runs in the BWT. % of the text. 
While $r$ would superficially appear to be only a measure of space complexity, it is actually appearing increasingly often in the time complexity of new algorithms as well. %Obviously, 
This makes having the smallest value of $r$ of growing importance.
%is advantageous.
Interestingly, unlike other popular measures of compression, the parameter $r$ is sensitive to the lexicographic ordering given to the text's alphabet.
Despite several past attempts to exploit this fact, a provably efficient algorithm for finding, or approximating, an alphabet ordering which minimizes $r$ has been  open for years. 

We help to explain this lack of progress
by presenting the first set of results on the computational complexity of minimizing BWT-runs via alphabet reordering.
We prove that the decision version of this problem is
NP-complete and cannot be solved in time $2^{o(\sigma + \sqrt{n})}$ unless the Exponential Time Hypothesis fails, where $\sigma$ is the size of the alphabet and $n$ is the length of the text. Moreover, we show that optimization variations of this problem yield strong inapproximability results.
Specifically, we show that the optimization problem is APX-hard.
In doing so, we relate two previously disparate topics: the optimal traveling sales person path of a graph and the number of runs in the BWT of a text. This provides a surprising connection between problems on graphs and text compression. 
% As a result we are able to prove (all assuming P $\neq$ NP):
% (i) No PTAS exists if we define the cost of a solution as exactly the number of runs exceeding $\sigma$;
% (ii) For all $\delta > 0$, no polytime $\epsilon n^{1/2}$-approximation algorithm exists for $\epsilon > 0$ small enough if we consider the number of runs exceeding $(1+\delta)\sigma$ as the cost of a solution. In this case the problem is APX-hard as well. 
% To the best of our knowledge these are the first ever inapproximability results pertaining to the BWT.
In addition, by relating recent results in the field of dictionary compression, we illustrate that an arbitrary alphabet ordering provides an $O(\log^2 n)$-approximation. 

Lastly, we provide an optimal linear time algorithm for a more restricted problem of finding an optimal ordering on a subset of symbols (occurring only once) under ordering constraints.
We also look at a generalization of
%version of the 
this problem on a newly discovered class of graphs with BWT like properties called Wheeler graphs where we show its NP-completeness. 
%Here also we show NP-hardness results on a related problem which we call Source Ordering.
\end{abstract}
\newpage
\section{Introduction and Related Work}

The Burrows-Wheeler Transform (BWT) is an essential building block in the fields of text compression and indexing with a myriad of applications in bioinformatics and information retrieval~\cite{langmead2009ultrafast,li2010fast,li2009soap2,navarro2016compact}.
%The Burrows-Wheeler Transform (BWT)~\cite{burrows1994block}  is ubiquitous in compression and text processing applications. 
Since it first arose in 1994~\cite{burrows1994block}, it has been utilized to provide the popular compression
algorithm bzip2 and has been adapted to provide powerful compressed text indexing data  structures, such as the FM-index~\cite{DBLP:conf/focs/FerraginaM00}.
%, which allow for pattern matching in compressed text space. 
% ~\cite{langmead2009ultrafast,li2010fast,li2009soap2,navarro2016compact}.  
%Improvements to the amount of space required by this transformation therefore have a significant impact on the fields compression and text indexing. 
%Moreover, the time complexity required to apply this transformation plays an important part in the time requirements of building various text indexing data structures.
%One reason for this is that BWT of ahighly repetitive text contains long runs of repeated characters, making the BWT of modern text collections highly amenable to Run-Length compression.  
%Another reason is the most recent advancements made by the data-structure community.
%Thanks to its usefulness the BWT and derived text indexing data structure FM-index~\cite{DBLP:conf/focs/FerraginaM00} are ubiquitous in bioinformatics and other text processing applications~\cite{langmead2009ultrafast,li2010fast,li2009soap2,navarro2016compact}. 
Hence, improvements to the algorithmic aspects of this transformation and related data structures can  have a significant impact on the research community.

\begin{wrapfigure}[]{r}{0.32\textwidth}
    \includegraphics[width=\linewidth]{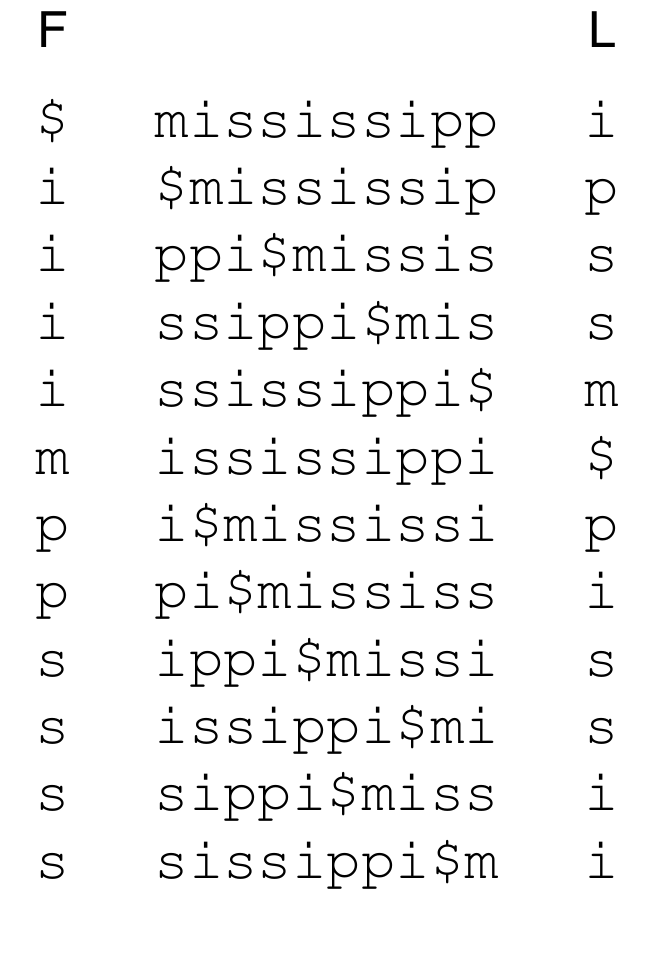}
    \caption{Column L ($r=9$) shows the 
    BWT of $mississippi$. %$r=9$). 
    %The `$\$$' symbol is the smallest in lexicographic order.
    }
    \label{fig:bwt_example}
\end{wrapfigure}
The BWT of a text $T[1,n]$, denoted by $BWT(T)$ is a reversible transformation which can be defined as follows: sort the circular shifts of $T$ in lexicographical order and place the sorted circular shifts in a matrix. By reading the last column of this matrix from top to bottom we obtain $BWT(T)$.  To make the transformation invertible a new symbol $\$$ (lexicographically smaller than others) is appended to $T$ prior to sorting the circular shifts.
See Figure \ref{fig:bwt_example}
for an example.
Historically, the BWT was introduced 
%by Burrows and Wheeler 
for the purpose of text compression~\cite{burrows1994block}, where it's effectiveness is based on symbols with shared preceding context forming long \emph{runs} (maximal unary substrings).

% It was later adapted into a full-text compessed index 
% known as the FM-index 
% using space close to $n\log\sigma$ bits~\cite{DBLP:conf/focs/FerraginaM00} where $\sigma$ is the size of the alphabet. It has since become an essential tool in bioinformatics. 
% Both the BWT of a string and the FM-index can be constructed in time linear  or even faster~\cite{stoc/Belazzougui14,DBLP:conf/soda/Kempa19,stoc/KempaK19,soda/MunroNN17}.

Recently, the number of runs ``$r$'' 
in the BWT %of the text 
has become of increasing interest.
%an integral part of the effectiveness of these techniques. One reason for this is that 
%BWT of a highly repetitive text contains long runs of repeated characters, making %the BWT of 
This can be attributed to the fact that 
many modern text collections are highly repetitive, which makes their 
 compression effective via the BWT followed by Run-Length encoding (i.e., in space proportional to $r$). 
This raised an interesting question: can we also index the text in space propositional to $r$? Note that the FM-index needs space proportional to $n$ (i.e., $\approx n\log\sigma$ bits, where $\sigma$ is the alphabet size).
% Another reason for this interest is the most recent advancements made by 
The data-structure community has made great strides in answering this question~\cite{bannai2018online,boucher2019prefix,DBLP:conf/soda/GagieNP18,KempaP18,KuhnleMBGLM19,ohno2018faster}.
The first such index was developed 
by M{\"a}kinen and Navarro 
in 2005 \cite{DBLP:conf/cpm/MakinenN05}. However, it lacked the ability to efficiently locate the occurrences of a pattern within space $\tilde{O}(r)$.
After a decade of related research~\cite{MakinenNSV09,DBLP:conf/soda/GagieNP18},
% \textcolor{red}{ADD MORE}
% Another serious attempt in this direction is by 
% M{\"{a}}kinen et al. in RECOMB'09~\cite{MakinenNSV09}.
%in SODA'18 the problem was remedied by the more
we now have fully
functional suffix trees in space proportional to $r$,
developed by Gagie {\it et al.}~\cite{JACM2020}. 
%It is among the most powerful new results, providing most of the functionality of the traditional FM-index with space bounded in terms of the number of runs. 
%The theory of such indexes has been further developed in 
%fully functional indexes using space proportional to the number of runs have also appeared in
%\cite{DBLP:journals/corr/abs-1809-02792}. 
%The construction of the Run-Length FM-index can also be parameterized on the number of runs of the BWT of the text as well~\cite{DBLP:conf/soda/Kempa19}. 
% The complexity of a recent almost-optimal BWT construction algorithm from SODA'19 is parameterized on the number of BWT-runs~\cite{DBLP:conf/soda/Kempa19}. 
Also note that the
recent optimal BWT construction algorithm for highly repetitive texts is parameterized by $r$~\cite{DBLP:conf/soda/Kempa19}.
%Given the importance of space efficient indexes and increasing amount of text repetition found in most applications, this line of run-length based compression research will undoubtedly continue to gain importance in the coming years. 
A technique reducing the value of this parameter $r$ would  have a significant impact on a large body of work.

A natural way to minimize $r$ is to change the lexicographic ordering assigned to symbols of the alphabet. 
%This has the advantage of not requiring any changes to the text nor requiring modifications to the indexing algorithm and data structure. 
To demonstrate that this can have an impact on $r$, consider as an example the text $mississippi$ with the usual ordering $\$ < i < m < p < s$ where $r = 9$, but with the ordering $\$ < s < i < p < m$ we have $r = 8$. 
In fact, there exist string families in which $r$ differs by a factor of $\Omega(\log n)$ for different orderings.  
This problem of reordering the alphabet is clearly fixed-parameter tractable in alphabet size $\sigma$ and has a trivial $O(\sigma!~ n)$ time solution. 
%which runs in time .
This may be adequate  when $\sigma$ is small as in DNA sequences. However, this is far from satisfactory from a theoretical point of view, or even  from a practical point
when the alphabet is  slightly larger, such as in protein sequences, natural language texts, ascii texts, etc.
%those in protein sequences or the English language, one would desire something more efficient.

%\noindent
\textbf{Related Work:} A related work in 2018  on block sorting based transformations by Giancarlo {\it et al.} gives a theoretical treatment of alphabet ordering in the context of the Generalized BWT~\cite{DBLP:conf/dlt/GiancarloMRRS18}. It was shown that for any alphabet ordering, $r$ is at most twice the number of runs in the original text, a result which then holds for the standard BWT as well. Note however that this gives no lower bound on $r$ and thus gives no results on the approximability of the run minimization problem. 
%Because of its applicability, 
There have been multiple previous attempts to develop other approaches to alphabet ordering. In the context of bioinformatics the role of ordering on proteins was considered in \cite{yang2010use} with approaches evaluated experimentally. %Experimental evaluation was used for judging the effectiveness of various methods 
Similar heuristic approaches evaluated through experiments were done in~\cite{DBLP:journals/spe/Abel10}. Researchers have also consider more restricted versions of this problem. For example, one can try to order a restricted subset of the alphabet, or limit where in the ordering symbols can be placed. On this problem heuristics have been utilized. 
%In the field of bioinformatics where multiple strings need to be indexed simultaneously the problem of ordering delimiters between these texts arises. This problem has been approached using heuristics implemented in 
In the field of bioinformatics, software tools like BEETL utilize these techniques to handle collections of billions of reads~\cite{BEETL}. 
%We will show that heuristics are often unnecessary and the problem can be solved optimally.
Another related work in~\cite{CazauxR19} shows, how to permute a given set of strings in linear time, such that the number of runs in the BWT of the (long) string
obtained by concatenating the input strings, separated by the same delimiter symbol is  minimized. 

Given the lack of success with attacking the main problem from the upper bound side, perhaps it is best to approach the problem from the perspective of lower bounds and hardness. 
To this end, we show why a provably efficient algorithm has been evasive.

%A second, more constrained, way of attempting to minimize the number of runs arises when we consider the BWT of 
% the string
%$T = T_0\$_0 T_1\$_1 \hdots T_{d-1}\$_{d-1}$, formed by appending the $d$ strings $T_0$ through $T_{d-1}$ with \emph{distinct delimiters} and then concatenating them together. By finding an optimal ordering on these \emph{delimiters} we can again reduce the number of runs in $BWT(T)$. We will see a simple example in Section \ref{sec:DO} in which an optimal delimiter ordering removes a factor of $\Omega(\log_\sigma d)$ over the naive delimiter ordering $\$_0 < \hdots \$_{d-1}$.
%The trivial solution takes $O(d!~ n)$ time. 
%However, several applications in bioinformatics call for a faster solution. 
%E.g. compression/indexing of sequencing data, which is a collection  of billions of reads, where tools like BEETL currently rely on heuristics~\cite{BEETL}. 

\section{Problem Definitions and Our Results}
For the following
problems we consider all strings to be over an alphabet $\Sigma$ of size $\sigma$.
A run in a string $T$ is a maximal unary sub-string. Let $\rho(T)$ be the number of runs in $T$.

\begin{problem}[Alphabet Ordering (AO)] \label{pro:AO}
%Given a string $T[1,n]$ over an alphabet $\Sigma = \{0, 1, \hdots, \sigma-1\}$ and an integer $t$, decide whether there exists an ordering of $\Sigma$ such that $\rho(BWT(T)) \leq t$.
Given a string $T[1,n]$ 
%over an alphabet $\Sigma = \{0, 1, \hdots, \sigma-1\}$
and an integer $t$, decide whether there exists an ordering of the symbols in its alphabet 
%$\Sigma = \{0, 1, \hdots, \sigma-1\}$ 
such that $\rho(BWT(T)) \leq t$.
\end{problem}
%This can be solved naively in time $O(\sigma!\cdot n)$. However, we show that a significantly faster algorithm is highly unlikely, unless the popular 
%Exponential Time Hypothesis (ETH) fails. 
%Specifically, we have the following results. 

% We present the following results.
\begin{theorem}
\label{thm:AO_NPC}
AO  is NP-complete and its corresponding minimization problem is APX-hard. 
\end{theorem}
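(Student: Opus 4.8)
The plan is to prove membership in NP directly and then to establish both NP-hardness and APX-hardness by a single approximation-preserving reduction. Membership is the easy part: an alphabet ordering is a certificate of size $O(\sigma\log\sigma)$, and given it one builds $BWT(T)$ and counts $\rho(BWT(T))$ in linear time. For hardness I would reduce from an APX-hard variant of the Traveling Salesperson Path problem, formulated combinatorially as: given a graph $H=(V,E)$, order $V$ as $v_1,\dots,v_k$ so as to minimize the number of indices $i$ with $\{v_i,v_{i+1}\}\notin E$ (this is essentially the $(1,2)$-TSP Path problem, which is APX-hard). The bridge between the two worlds is that the relative order, inside the alphabet of the text $T_H$ I will build, of a distinguished block of \emph{vertex symbols} $\{a_v : v\in V\}$ plays the role of the salesperson's path, and $T_H$ is engineered so that $\rho(BWT(T_H))$ is a fixed constant plus the cost of that path in $H$.

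The construction uses three kinds of gadgets. For each vertex $v$, a \emph{vertex gadget} whose copies of $a_v$ are suffixed with fixed helper tags; this forces the block of rotations of $T_H$ that begin with $a_v$ to have a predictable internal order and an essentially singleton run pattern whose leftmost and rightmost $BWT$ symbols are a symbol $\star_v$ unique to $v$. For each edge $\{u,v\}\in E$, a \emph{connector} $e_{uv}$ whose block is a short fixed string over $\{\star_u,\star_v\}$: placing $e_{uv}$ immediately between $a_u$ and $a_v$ in the ordering fuses the $\star_u$-run ending $a_u$'s block with the $\star_v$-run beginning $a_v$'s block, whereas two consecutive vertex symbols with no connector between them abut on mismatched symbols and lose a run. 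This connector device is the crucial trick: a bare $BWT$ block boundary can only realize ``adjacency $=$ symbol equality'', which cannot encode an arbitrary graph, but an optional connector that exists precisely for the edges of $H$ can. Finally, separator and padding symbols, pinned by sub-gadgets to fixed extreme positions in every optimal ordering so that they can neither help nor interfere, and so that the unused connectors are confined to a region contributing a fixed number of runs. One then argues that every \emph{normalized} ordering --- vertex symbols contiguous, each connector either bridging its own endpoints or parked in the leftover region, padding pinned --- satisfies $\rho(BWT(T_H)) = M + (\text{number of non-edge transitions of the corresponding path})$ for a fixed $M=M(H)$, and that no ordering beats the best normalized one.

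Given this, the theorem drops out: the reduction is polynomial, maps optima to optima, and the additive offset together with the exact linear dependence on the path cost make it an $L$-reduction, so the APX-hardness of the source transfers to the minimization version of AO, and the exact correspondence yields NP-hardness; a size-conscious version of the same construction also underlies the ETH lower bound stated in the introduction.

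The main obstacle is the exact run-count bookkeeping for $BWT(T_H)$ \emph{under an arbitrary alphabet ordering}, not just the intended ones. Three things must be pinned down. (i) For each block, that the sorted order of its rotations --- and hence its run contribution --- is exactly as designed; this is where the fixed helper tags earn their keep, breaking every tie by an ordering-independent suffix so that only the relative positions of the vertex and connector symbols matter. (ii) That a non-normalized ordering --- interleaving separators among the vertex symbols, placing a connector where it does not bridge its endpoints, or permuting symbols meant to be pinned --- can never outperform a normalized one; this needs an exchange/uncrossing argument showing each such deviation can be removed without increasing $\rho$. (iii) Gadget independence: that the contributions of the different vertex and edge gadgets are additive and produce no unintended run fusions across gadget boundaries. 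Essentially all of the work lies in this accounting; once it is in place the TSP correspondence is transparent.
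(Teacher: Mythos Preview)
Your plan shares the paper's skeleton---an L-reduction from $(1,2)$-TSP Path---but the gadgetry is genuinely different. The paper does not use vertex symbols whose relative order is the salesperson's path together with per-edge connector symbols inserted between them; instead it passes through an intermediate \emph{Column Ordering} problem on a padded edge--vertex incidence matrix $M$ (ordering the columns and concatenating the rows already encodes the TSP path), and then builds $T$ so that a single $0$-block of $BWT(T)$ simulates the linearization $L(M_\pi)$ via ``column bundles'': one bundle per column $j$, bound together by a dedicated symbol $C_j$, and threaded by many unique $\$_i$ symbols that can be freely permuted to line up adjacent $1$'s. All of the ``arbitrary-ordering'' bookkeeping you worry about is handled by five structural Properties that any optimal ordering must satisfy, and the exchange argument collapses to a single lemma showing that violating any Property is strictly suboptimal. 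Your boundary-fusing scheme trades this single-block simulation for a more direct TSP picture, at the price of spreading the accounting over many block boundaries and pushing the real work into your step~(ii).

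One concrete issue your sketch should address: the internal order of the $e_{uv}$-block---which of $\star_u,\star_v$ sits on top---is fixed by the ordering-independent suffixes following $e_{uv}$, not by whether $a_u\prec a_v$ or $a_v\prec a_u$ in the solution. So as described a connector fuses both boundaries only in one orientation of the edge; you will need either a directed source problem or doubled connectors $e_{uv},e_{vu}$ together with an argument that stacking both between the same pair can never over-save. The paper's incidence-matrix encoding sidesteps this entirely, because there an edge is a \emph{row} with two $1$'s and the run saved by making those $1$'s adjacent is orientation-free by construction. None of this is fatal to your approach, but it is exactly the kind of detail that decides whether the uncrossing argument in~(ii) goes through cleanly.
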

%As one would guess, we 
%This hard instance is constructed by employing a particular alphabet which lets us model .
%We employ a specific alphabet to construct the hard instance used here.
The problem can be solved in $n \cdot \sigma! = n\cdot 2^{O(\sigma\log\sigma)}$ time naively. However, any significant improvement seems unlikely as per the Exponential Time Hypothesis (ETH)~\cite{DBLP:journals/eatcs/LokshtanovMS11}. %which states that 3-CNF SAT cannot be solved in $2^{o(n)}$ time
%illustrates the limits of such an approach.
%for general alphabets we obtain the following result 
%by combining the reduction
%used to prove Theorem \ref{thm:AO_NPC} 
%and the 
%based on the widely accepted (yet unproven) Exponential Time Hypothesis (ETH).
%\begin{corollary}
%\label{cor:AO}
%AO cannot be solved in time $2^{o(\sigma)}n$, unless the Exponential Time Hypothesis fails.
%\end{corollary}
\begin{corollary}
\label{cor:AO_exp_time}
Under ETH, AO cannot be solved in time $2^{o(\sigma + \sqrt{n})}$.
\end{corollary}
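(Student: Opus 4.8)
The plan is to derive Corollary~\ref{cor:AO_exp_time} directly from the hardness reduction behind Theorem~\ref{thm:AO_NPC}, simply by keeping track of how large the AO-instances it produces are. Recall that the NP-hardness of AO comes from a polynomial-time reduction from an NP-hard problem $\Pi$ whose instances carry a natural size parameter $k$ (e.g.\ the number of variables of a $3$-CNF formula, or the number of vertices of the graph in the Hamiltonian-/TSP-path variant that also drives the APX-hardness). The first ingredient is the standard ETH consequence for $\Pi$: by the Sparsification Lemma, ETH implies that $\Pi$ admits no $2^{o(k)}$-time algorithm even on sparse instances (a $3$-CNF with $O(k)$ clauses, or a bounded-degree graph with $O(k)$ edges), so from now on I assume the reduction is fed such a sparse instance.

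The second ingredient is a pair of quantitative claims about the string $T$ and threshold $t$ output by the reduction, which I would verify by inspecting its gadgets: (i) the alphabet size is $\sigma = \Theta(k)$, because only $O(1)$ fresh symbols are introduced per element (variable/clause, or vertex/edge) of the sparse $\Pi$-instance; and (ii) the text length is $n = O(k^2)$, because each of the $O(k)$ gadgets contributes at most $O(k)$ characters to $T$. These two bounds are the only part needing genuine care — one must confirm the length blow-up is truly quadratic rather than some higher polynomial, since a cubic blow-up would weaken the conclusion to $2^{o(\sigma + n^{1/3})}$. Everything else, namely correctness of the reduction, is inherited verbatim from the proof of Theorem~\ref{thm:AO_NPC}.

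Putting the pieces together: suppose AO could be solved in time $2^{o(\sigma + \sqrt n)}$. Running this algorithm on the output of the reduction would solve $\Pi$ in time $\mathrm{poly}(k) + 2^{o(\Theta(k) + \sqrt{O(k^2)})} = 2^{o(k)}$, contradicting ETH. This also explains why the statement features $\sqrt n$ and not $n$: the reduction inherently forces $n = \Theta(\sigma^2)$, so $\sigma$ and $\sqrt n$ are both $\Theta(k)$, and $2^{o(\sigma+\sqrt n)} = 2^{o(k)}$ is the strongest ETH-style lower bound compatible with a text of that length (ruling out $2^{o(n)}$ would be impossible). The main obstacle is therefore bookkeeping rather than conceptual: pinning down the gadget sizes and checking that the reduction may legitimately start from a sparse $\Pi$-instance; no idea beyond Theorem~\ref{thm:AO_NPC} is required.
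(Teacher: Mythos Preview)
Your proposal is correct and matches the paper's own argument essentially line for line: the paper takes $\Pi$ to be Hamiltonian Path (citing the known ETH lower bound rather than invoking the Sparsification Lemma directly), observes that the reduction yields $\sigma=\Theta(k)$ and $|T|=\Theta(k^2)$ where $k$ is the number of vertices, and concludes exactly as you do. Your two ``quantitative claims'' are precisely the facts the paper asserts, so the only remaining work is the gadget-size bookkeeping you already flagged.
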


It is known that 
%the number of runs in the BWT of a text 
$\rho(BWT(T))$
can be lower bounded by the size of string attractor $\gamma$, a recently proposed compressibility measure~\cite{DBLP:conf/ictcs/Prezza18}. Kempa and Kociumaka showed that
$\rho(BWT(T))$
%the number of runs in BWT 
can be upper bounded by  $O(\gamma\log^2 n)$~\cite{DBLP:journals/corr/abs-1910-10631}. However, $\gamma$ is independent of the alphabet ordering
%, therefore any ordering on the alphabet provides a $\Theta(\log^2 n)$-approximation to AO 
and the following result is immediate.

\begin{corollary}
\label{thm:AO_approx}
Any alphabet ordering is an $O(\log^2 n)$-approximation for AO.
\end{corollary}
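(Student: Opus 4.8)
The plan is to exploit the fact, already highlighted in the excerpt, that the string attractor size $\gamma=\gamma(T)$ is a property of $T$ alone: applying a bijection to the alphabet does not change which sets of text positions form an attractor (equality and inequality of symbols between positions is preserved), so there is a single value $\gamma$ associated with $T$, independent of the chosen ordering. Given this, an arbitrary ordering's run count can be sandwiched between $\gamma$ and $O(\gamma\log^2 n)$, and the approximation guarantee falls out immediately.

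Concretely, I would first fix notation: for an ordering $\pi$ of $\Sigma$ write $r_\pi:=\rho(BWT_\pi(T))$, and let $r_{\mathrm{OPT}}:=\min_\pi r_\pi$ be the optimum of the minimization version of AO. Next I would invoke the two cited bounds, noting that neither of them privileges a distinguished order — both are purely combinatorial statements about a string over a \emph{given} ordered alphabet, hence they hold for every ordering $\pi$ separately: (i) $\gamma\le r_\pi$ (the attractor lower bound on the number of BWT runs), and (ii) $r_\pi\le c\,\gamma\log^2 n$ for an absolute constant $c$ (the Kempa--Kociumaka upper bound). Applying (i) to an optimal ordering and (ii) to an arbitrary ordering $\pi$ gives
\[
 r_\pi \;\le\; c\,\gamma\log^2 n \;\le\; c\,r_{\mathrm{OPT}}\log^2 n,
\]
so any ordering is an $O(\log^2 n)$-approximation for AO.

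The argument is short, and the only point needing care — rather than a genuine obstacle — is bookkeeping on the length parameter and on the applicability of the cited bounds: the quantity $n$ (the length of $T$, or of $T\$$) is unchanged by reordering, and both the attractor inequality and the $O(\gamma\log^2 n)$ bound must be quoted in the form that holds for an arbitrary total order on $\Sigma$, not merely for the natural one. Once those are stated correctly, no further work is required.
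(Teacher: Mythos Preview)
Your proposal is correct and mirrors exactly the paper's argument: the paper simply observes that $\gamma$ is independent of the alphabet ordering, that $\gamma\le\rho(BWT(T))$ and $\rho(BWT(T))=O(\gamma\log^2 n)$ for every ordering, and declares the corollary immediate. Your write-up is slightly more explicit about which inequality is applied to which ordering, but the content is identical.
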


We now introduce two variants of AO: (i) a specialization of AO  where we impose more constraints on the ordering, suitable for the BWT of a set of strings and (ii) 
%Here we present an interesting use-case in bioinformatics where the algorithm is optimal.
%that there exist an optimal time algorithm with an interesting. 
 a generalization to the class of graphs known as Wheeler graphs which allow for BWT based indexing~\cite{DBLP:journals/tcs/GagieMS17}.
%Motivated from indexing reads in bioinformatics we define the following problem.

\begin{problem}[Constrained Alphabet Ordering (CAO)]\label{prob:delimiter_ordering}
Given a set of $d$ strings $T_0, \hdots, T_{d-1}$ of total length $N$, 
%  append each string $T_i$ with the new symbol $\$_i$ $(0\leq i \leq d-1)$ and 
find an ordering $\pi$ on the symbols $\$_i$ $(0\leq i \leq d-1)$ such that $\$_{\pi(0)} \prec \$_{\pi(1)} \hdots \prec \$_{\pi(d-1)} \prec 0 \hdots \prec \sigma-1$ and $\rho(BWT(T_0\$_0T_1\$_1 \hdots T_{d-1}\$_{d-1}))$ is minimized.
\end{problem}

We call 
$\$_0,\$_1,\dots,\$_{d-1}$
%these symbols $\$_i$ ($0 \leq i \leq d-1$)
\emph{special symbols}. 
In Section~\ref{sec:DO}, we provide an example where an optimal ordering of special symbols removes a factor of $\Omega(\log_\sigma d)$ in the number of runs, demonstrating that this can be a worthwhile preprocessing step.
We refer to~\cite{BEETL} for an immediate use case
%of this problem
in bioinformatics, where the input is a large collection of
DNA reads.
%short DNA fragments called reads.

\begin{theorem}
\label{thm:delim_odering}
%Constrained Alphabet Ordering can be solved in optimal $O(N)$ time where $N$ is the sum of the string lengths.
CAO can be solved in optimal $O(N)$ time,  where 
$N = |T_0|+|T_1+\dots+|T_{d-1}|$.
\end{theorem}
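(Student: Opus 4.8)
The plan is to understand exactly how the positions of the special symbols $\$_0,\dots,\$_{d-1}$ in the BWT depend on the ordering $\pi$, and then show that each special symbol contributes a run cost that depends only on which special symbols are its immediate neighbors in the sorted list of rotations — so the problem reduces to an optimization over an auxiliary structure that can be solved greedily. First I would observe that since every $\$_i$ is smaller than all ordinary symbols $0,\dots,\sigma-1$, the rotations of $T_0\$_0 T_1\$_1\cdots T_{d-1}\$_{d-1}$ that begin with a special symbol occupy the top $d$ rows of the BWT matrix (just below the row for the global sort position), and within those $d$ rows the order is determined by $\pi$ together with the suffixes $T_{i+1}\$_{i+1}\cdots T_{d-1}\$_{d-1}T_0\$_0\cdots$ following each $\$_i$. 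Crucially, the relative order among the remaining $N-d$ rows — those beginning with an ordinary symbol — is \emph{independent} of $\pi$, because comparing two such rotations never forces a comparison between two special symbols unless one string is a rotation-prefix of the other (which cannot happen generically, and can be handled as a degenerate case). Likewise the preceding character of each non-special-starting rotation is fixed.

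Next I would make this precise by splitting $\rho(BWT)$ into contributions. The characters in BWT positions coming from non-special-starting rotations form a fixed sequence of ``blocks'' whose internal runs are independent of $\pi$; the only thing $\pi$ controls is (a) the order of the $d$ rows at the top, which produces the substring of BWT equal to a permutation of the last characters $T_{\pi(0)}[\,|T_{\pi(0)}|\,], \dots$ — wait, more carefully, the character preceding $\$_i$ in the cyclic string, call it $c_i$ — arranged in the order $\pi$; and (b) the single BWT character preceding the global rotation that starts with the lexicographically smallest special symbol, and a handful of ``seam'' characters where the block of special rows meets the ordinary-symbol rows and where each $\$_i$ itself appears in the BWT (it appears at the row whose rotation starts with $T_{i+1}\$_{i+1}\cdots$). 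So the run count is (fixed constant independent of $\pi$) plus $\rho$ of the length-$d$ string $c_{\pi(0)}c_{\pi(1)}\cdots c_{\pi(d-1)}$ plus $O(1)$ boundary corrections at each of the $O(d)$ seams, and each boundary correction depends only on $\pi$ through the local neighbors. I would then argue that minimizing this is exactly: order the multiset $\{c_0,\dots,c_{d-1}\}$ to minimize runs, subject to the constraint that $\$_i$'s position in $\pi$ is tied to a fixed ``target'' character (the character that BWT-position for $\$_i$ lands next to), giving each $\$_i$ a desired left-neighbour and right-neighbour symbol. This is a small-scale arrangement problem on $d$ items with $O(1)$ distinct adjacency-cost types, which admits a greedy / bucket solution in $O(d)$ time once the relevant characters are extracted.

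Then I would assemble the algorithm: build the suffix array (or the BWT itself) of $U = T_0\$_0\cdots T_{d-1}\$_{d-1}$ using a linear-time construction, which is legitimate since the alphabet is $\{0,\dots,\sigma-1\}$ plus $d$ special symbols and we may fix any provisional order among the specials for the purpose of \emph{locating} the relevant rows and characters (the structure we need — which rotation-suffixes follow each $\$_i$, and which ordinary-character rotation each $\$_i$'s BWT slot abuts — is order-independent as argued above). Extract in $O(N)$ time the list of characters $c_i$ and the target neighbours for each $\$_i$; run the $O(d)$ combinatorial optimization to get the optimal $\pi$; and finally note the $\Omega(N)$ lower bound is trivial since the input must be read. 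The main obstacle I anticipate is the bookkeeping at the seams: proving rigorously that the per-$\$_i$ contribution to $\rho$ decomposes additively with only local dependence on $\pi$, handling the degenerate cases where some $c_i$ coincide or where a string $T_i$ is empty or periodic in a way that makes two rotations compare via their special symbols, and showing the resulting constrained arrangement problem really is solvable greedily in linear time rather than merely in polynomial time. Everything else is standard linear-time string machinery.
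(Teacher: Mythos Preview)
Your central structural claim is wrong, and the error is exactly the case you dismiss as ``degenerate.'' You assert that the relative order of the $N-d$ rotations beginning with an ordinary symbol is independent of $\pi$, except when ``one string is a rotation-prefix of the other.'' But two rotations starting inside $T_i$ and $T_j$ compare via their special symbols precisely when the suffix $T_i[k..]$ equals the suffix $T_j[l..]$ as ordinary strings, and this is the \emph{generic} situation whenever the input strings share suffixes. Concretely, take $T_0 = ba$, $T_1 = ca$, $T_2 = ba$: the three rotations starting $a\$_0\cdots$, $a\$_1\cdots$, $a\$_2\cdots$ sit in the $a$-region of the BWT with preceding characters $b,c,b$, and whether that region reads $bcb$ or $bbc$ is controlled entirely by $\pi$. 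There is one such region for every string $s$ that occurs as a suffix of more than one $T_i$, and together these regions can account for $\Theta(N)$ BWT positions whose relative order, and hence whose run contribution, depends on $\pi$. Your reduction to ordering the single length-$d$ string $c_{\pi(0)}\cdots c_{\pi(d-1)}$ captures only the top block (rotations starting with a special symbol) and ignores all of these.

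The paper's proof proceeds quite differently. It groups the rotations into \emph{blocks} $B_s$, one for each string $s$ that arises as the common ordinary-symbol prefix before the first special symbol; every block's internal order is governed by $\pi$, and the blocks form a tree. The key observation is that the permutation within each block can be chosen independently (ordering the label-groups at a node is free, and reordering \emph{within} a label-group is exactly what fixes the order in the child block), so the problem reduces to a Tuple Ordering problem: a fixed sequence of tuples (one per block, containing its distinct BWT labels), where one must order the elements within each tuple to minimize runs in the concatenation. That problem is then solved by a left-to-right greedy marking procedure in time linear in the total tuple size, and the blocks themselves are extracted in $O(N)$ time via a provisional BWT plus longest common extension queries. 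Your proposal is missing both the block decomposition and the Tuple Ordering reduction; the ``bookkeeping at the seams'' you anticipate as the hard part is in fact the entire problem.
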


% as in a typical high-throughput DNA sequencing experiment data, which consists of several billions of short DNA fragments called reads.   
% Therefore, we refer to~\cite{BEETL} for an immediate use case of our result in bioinformatics.
%For constant $\sigma$ the above result is optimal. The value of $\sigma$ is four in mapping DNA reads. 
%We also provide an example where an optimal ordering removes a factor of $\log_\sigma d$ in the number of runs, demonstrating this can be a worthwhile preprocessing step. 
%Also, we note that if we modify Problem \ref{prob:delimiter_ordering} to have the objective of minimizing the number of runs in the extended-BWT defined in~\cite{DBLP:journals/tcs/MantaciRRS07}, that is $eBWT(T_0, \hdots, T_{d-1})$, we obtain the same time bounds for that transformation as well (see Appendix \ref{}). 
%Here there exists related work by ~\cite{CazauxR19} on read ordering, where the authors show how to find an optimal concatenation ordering for run reduction in the BWT. This problem is distinct from the problem addressed here. 

As an extension of CAO we consider the Source Ordering Problem (SO) on Wheeler graphs. Here the goal is to order the source vertices of a Wheeler graph $G$ in order to minimize the number of runs in the induced string denoted by $BWT(G)$ (exact definition of $BWT(G)$ is deferred to Section \ref{sec:AO} and Appendix \ref{appendix:wheeler}). In contrast to  
CAO, %Constrained Alphabet Ordering, 
SO %Source Ordering
is computationally difficult.

\begin{problem}[Source Ordering(SO)]
Given a Wheeler graph $G$ and an integer $t$, decide whether there exists an ordering of the sources such that $\rho(BWT(G)) \leq  t$.
\end{problem}

%As a corollary to Theorem \ref{thm:AO_NPC} we obtain, the following result.
\begin{theorem}
\label{thm:SO_NPC}
SO is NP-complete and corresponding minimization problem is APX-hard. 
\end{theorem}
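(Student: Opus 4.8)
The plan is to reduce from the Alphabet Ordering problem (AO), whose hardness — NP-completeness and APX-hardness of the minimization version — is already established in Theorem \ref{thm:AO_NPC}. Membership in NP is routine: a certificate is an ordering of the source vertices, and given such an ordering one can compute $BWT(G)$ and count its runs in polynomial time (the definition of $BWT(G)$ in Section \ref{sec:AO}/Appendix \ref{appendix:wheeler} is constructive). So the work is entirely in the hardness reduction, and to obtain APX-hardness it must be an approximation-preserving (L-reduction or gap-preserving) reduction, not merely a Karp reduction.

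First I would recall that CAO is the special case of the source-ordering question in which the Wheeler graph is exactly the path-like automaton recognizing a concatenation $T_0\$_0 T_1\$_1 \cdots T_{d-1}\$_{d-1}$, with the $d$ sources corresponding to the $d$ special symbols $\$_i$; reordering the sources is then literally reordering the special symbols. So SO syntactically generalizes CAO, but CAO is in P, so that alone gives nothing. The real leverage is that AO — reordering an arbitrary alphabet of a single string — can itself be encoded as a source-ordering instance. Given an AO instance $(T[1,n], t)$ over alphabet $\Sigma=\{a_1,\dots,a_\sigma\}$, I would build a Wheeler graph $G_T$ whose edge labels come from a fixed small alphabet together with $\sigma$ distinguished "selector" source vertices $s_1,\dots,s_\sigma$, one per symbol $a_j$, arranged so that the Wheeler (co-lexicographic) order forced on the internal vertices by a choice of linear order on $s_1,\dots,s_\sigma$ exactly mimics the effect of choosing that same order on $\Sigma$ when forming $BWT(T)$. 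Concretely, each position $i$ of $T$ becomes a small gadget path whose entry is reachable (within $G$) only through the selector source corresponding to $T[i]$, so that the relative Wheeler-order of two positions $i,i'$ with $T[i]\ne T[i']$ is dictated by the order of their selectors, while the order among positions with the same symbol is fixed by a secondary key built from the following context — reproducing the suffix/rotation comparison used in the BWT. The induced string $BWT(G_T)$ is then, up to a fixed additive number of runs coming from the bounded-size "plumbing" of the gadgets (independent of the ordering), equal to $BWT(T)$ under the corresponding alphabet order; hence $\rho(BWT(G_T))$ and $\rho(BWT(T))$ differ by an additive constant $c$ that does not depend on the ordering, which makes the reduction L-type and transports both NP-hardness (set the SO threshold to $t+c$) and APX-hardness.

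The main obstacle I expect is verifying that $G_T$ is a genuine Wheeler graph for \emph{every} choice of source ordering — i.e. that the Wheeler axioms (a single linear order on vertices that is consistent across all in-edges and respects edge labels) can always be satisfied simultaneously with the ordering constraints being imposed by the reduction, and that the order it forces on the "content" vertices is exactly the co-lexicographic order on the corresponding rotations of $T$. This requires care because Wheeler graphs are quite rigid: the out-degree-by-label and in-degree-by-label structure must be compatible with one global order, so the gadgets have to be designed so that the only degree of freedom left, after the graph structure is fixed, is precisely the permutation of the $\sigma$ sources, and that permutation must propagate cleanly down the gadgets. A secondary technical point is ensuring the additive run-count offset $c$ is truly ordering-independent: the gadgets must be glued so that no new run-boundary in $BWT(G_T)$ is created or destroyed depending on where the sources sit in the order, which I would handle by padding gadget labels with fresh sentinel characters that always contribute the same fixed run pattern. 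Once the Wheeler property and the exact correspondence $\rho(BWT(G_T)) = \rho(BWT(T)) + c$ are in hand, NP-completeness and APX-hardness follow immediately from Theorem \ref{thm:AO_NPC}.
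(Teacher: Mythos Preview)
The paper takes a different and much simpler route than you propose: it does \emph{not} reduce from AO, but from the intermediate problem CO (Column Ordering on the modified incidence matrix $M$) that was set up in Phase~1. The observation is that the ``column bundle'' machinery of Phase~2 --- where the extra symbols $C_j$ and $2$ exist only to bind together the threads belonging to column $j$ and to point them all into the $0$-block --- becomes unnecessary once we are allowed a Wheeler graph instead of a string. One simply creates a source $s_j$ for each column $j$ and hangs the same $0^{i+1}1$ paths (one per $1$-entry of column $j$, plus the all-$0$ path) directly off $s_j$. The result is a forest, hence trivially Wheeler for any ordering of the sources, and ordering the sources is literally ordering the columns of $M$; the $L$-reduction analysis of Phase~1 (Lemmas~\ref{lem:CO_costs}--\ref{lem:L-red_not_first_and_last}) carries over with no new work. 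So the SO reduction is strictly easier than the AO reduction, not a corollary of it.

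Your plan has a genuine gap at the point you yourself flag as the ``main obstacle.'' You want one source $s_a$ per alphabet symbol $a$ and a gadget per text position, with the Wheeler order on those position-gadgets reproducing the rotation order of $T$. But for two positions $i,i'$ with $T[i]=T[i']$, their relative rotation order is decided by comparing $T[i{+}1]$ against $T[i'{+}1]$, and if those agree, by $T[i{+}2]$ against $T[i'{+}2]$, etc.\ --- a comparison that itself depends on the very alphabet ordering you are trying to encode via the sources. Your ``secondary key built from the following context'' therefore cannot be a fixed, ordering-independent gadget; it would have to re-encode the whole recursive suffix comparison, which is circular. Without a concrete construction resolving this, the claimed identity $\rho(BWT(G_T)) = \rho(BWT(T)) + c$ with $c$ independent of the ordering is unsupported, and the reduction does not go through. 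Reducing from CO instead of AO sidesteps the issue entirely, because column ordering has no such recursive dependence.
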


% \paragraph{Roadmap.} We start with preliminaries in Section \ref{sec:preliminaries}. We prove Theorems \ref{thm:AO_NPC}, \ref{thm:AO_approx}, \ref{thm:max_snp}, \ref{thm:AO_1+delta_inapprox}, \ref{thm:SO_NPC} and Corollary \ref{cor:AO} in Section \ref{sec:AO}. In Section \ref{sec:DO} we provide an algorithm proving Theorem \ref{thm:delim_odering}. Finally, in Section \ref{sec:open_problems} we discuss some open problems related to Alphabet Ordering.

\section{Preliminaries}
\label{sec:preliminaries}

\begin{wrapfigure}[]{r}{0.32\textwidth}
\vspace{-20mm}
    \includegraphics[width=\linewidth]{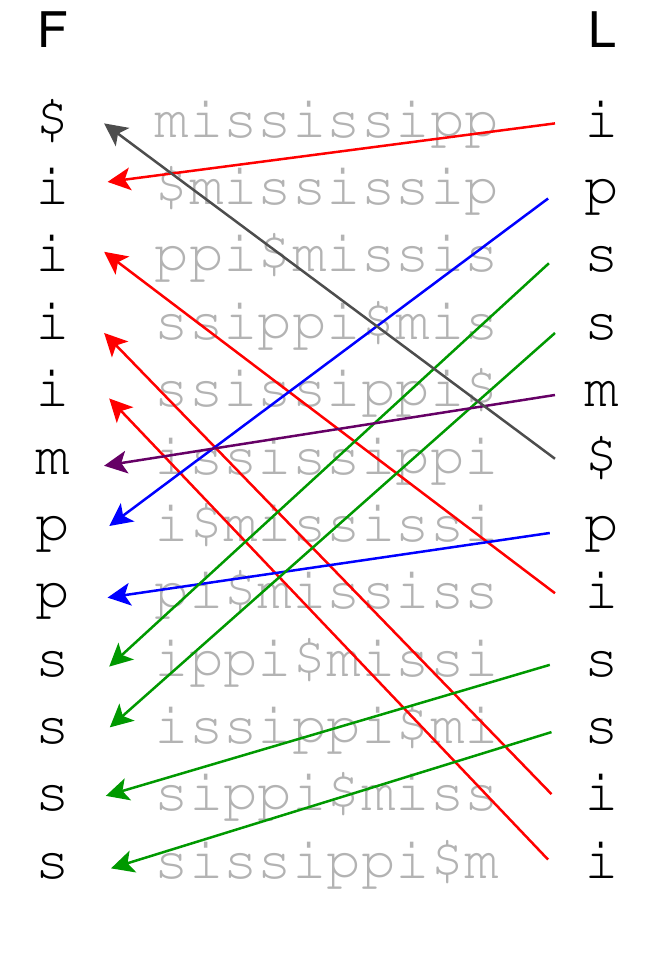}
    \caption{The LF-mapping between the last and first columns of sorted circular shifts matrix. 
    Also, note that LF-mappings from the same symbol (shown in the same color) do not cross.
    }
    \label{fig:LF_map}
    \vspace{-2mm}
\end{wrapfigure}

\subsection{The LF-map}
The key observation which makes BWT based indexing possible is the LF-mapping. The LF-mapping maps a symbol in the column $L$ to its corresponding position in the column $F$, where $L$ and $F$ are the last and first columns respectively of the matrix of sorted circular shifts. %The FM-index works by exploiting the properties of LF-mapping which make its computation amendable to  data structures.
One observation we make now
is that if we were to write out the directed path obtained by following the LF-mapping from row to row, creating a vertex for each row, and marking each vertex with the symbol seen in the $L$ column, we would get a directed path with vertices labeled with the reverse of the string $T$. For example, in Figure \ref{fig:bwt_example} if we start from the first row, we obtain the path $i \rightarrow p \rightarrow p \rightarrow i \rightarrow s \rightarrow s \rightarrow i \rightarrow s \rightarrow s \rightarrow i \rightarrow m \rightarrow \$$. This view of the LF-mapping will be used throughout,
%this paper,
as all LF-mappings will be illustrated as directed paths labeled in this fashion. The second observation we make is that the relative order of identical symbols is preserved by the LF-mapping, i.e. in Figure \ref{fig:bwt_example} the arrows of the same color do not cross. Our hardness results will use this property extensively.

\subsection{L-reductions}
Our inapproximability results 
% related to 
% in this work 
use L-reductions~\cite{crescenzi1997short}.
%For the optimization problems, % presented in this paper, 
%the goal is to minimize the cost of the solution. 
We will use the following notation: % throughout this paper: 
 \begin{itemize}
\item $\OPT_A(x)$ denotes the cost of an optimal solution to the instance $x$ of Problem $A$. 
\item $c_A(y)$ denotes the cost of a solution $y$ to an instance $x$ of Problem $A$ (suppressing the $x$ in the notation $c_A(x,y)$).
\item Since all problems presented here are minimization problems the approximation ratio can be written as $R_A(x,y) =  \frac{c_A(y)}{\OPT_A(x)}$, which is $\geq 1$.
\item Let $f_A(x) = x'$ be a mapping of an instance $x$ of Problem $A$ to instance $x'$ of Problem $B$. The mapping $f_A$ must be computable in polynomial time.
\item Let $y'$ be a solution to instance $x' = f_A(x)$, and $g_B(y') = y$ the mapping of a solution $y'$ to a solution $y$ for instance $x$.  The mapping $g_B$ must be computable in polynomial time.
\end{itemize}
%
% Taking $x$, $y$, $x'$ $y'$ as above, an L-reduction is defined by the pair of functions $(f_A, g_B)$ that satisfy: (i) there exists a constant $\alpha > 0$ such that $\OPT_B(f_A(x)) \leq \alpha \OPT_A(x)$, (ii) there exists a constant $\beta > 0$ such that $|\OPT_A(x) - c_A(g_B(y'))| \leq \beta|\OPT_B(f_A(x)) - c_B(y')|$. L-reductions preserve APX-hardness~\cite{DBLP:journals/jcss/PapadimitriouY91}. In addition, we use that for L-reductions of minimization problems $R_B(x', y') \leq r$ implies $R_A(x,y) \leq 1 + \gamma(r-1)$ for some constant $\gamma > 0$.
%
Taking $x$, $y$, $x'$ $y'$ as above, an L-reduction is defined by the pair of functions $(f_A, g_B)$ such that % $R_B(x', y') \leq 1 + r$ implies $R_A(x,y) \leq 1 + O(r)$.
there exist constants $\alpha, \beta > 0$ where for all $x$ and $y$ the following conditions hold:
$$~~\text{(i)}~~\OPT_B(f_A(x)) \leq \alpha \OPT_A(x)~~~~\text{and}~~~~  \text{(ii)}~~c_A(g_B(y')) - \OPT_A(x) \leq \beta\Big(c_B(y') - \OPT_B(f_A(x))\Big).~~$$
As a result, $R_B(x', y') = 1 + \varepsilon$ implies $R_A(x,y) \leq 1 + \alpha \beta \varepsilon = 1+O(\varepsilon)$. \\
L-reductions preserve APX-hardness~\cite{DBLP:journals/jcss/PapadimitriouY91}. 
%Note that if $c_A(y) = c_B(y')$ then Conditions (i) and (ii) are satisfied.

% The complexity class Max-SNP as introduced in \cite{DBLP:journals/jcss/PapadimitriouY91} is defined syntactically in terms of the optimization problem $\max_S|\{x:\psi(x, G, S)\} |$ where $x$ is a set literals, $G$ is a set of known relations, $S$ a set of unknown relations, and $\psi$ is a quantifier free Boolean expression. This is in contrast to the complexity class APX which is defined as the set of problems which admit polynomial-time constant approximation algorithms. It was also established in \cite{DBLP:journals/jcss/PapadimitriouY91} that Max-SNP $\subseteq$ APX and that Max-SNP-hard implies NP-hard. We do not directly make use of the definition of Max-SNP, but rather use that the closure of Max-SNP under L-reductions is APX\cite{DBLP:journals/siamcomp/KhannaMSV98}. We also note that we establish both Max-SNP and APX-hardness of AO$_k$ and SO$_k$ in this paper and that they are not Max-SNP complete of APX complete. Indeed, by Theorem \ref{thm:AO_inapprox} and Theorem \ref{thm:SO_inapprox} AO$_k$ and SO$_k$ are not in APX, and hence not in Max-SNP.

\section{Hardness of Alphabet Ordering}
\label{sec:AO}
We will demonstrate a sequence of L-reductions from the
\textbf{(1,2)-TSP Path Problem} where the aim is to find a Hamiltonian Path of minimum weight through an undirected complete graph on $n$ vertices where all edges have weights either 1 or 2. 

\begin{lemma}
[\cite{DBLP:journals/mor/PapadimitriouY93}]
\label{1,2-TSP-lemma}
(1,2)-TSP Path is APX-hard, even with only $\Theta(n)$ edges of weight 1.
\end{lemma}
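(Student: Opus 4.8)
Since the statement is attributed to Papadimitriou and Yannakakis~\cite{DBLP:journals/mor/PapadimitriouY93}, the plan is to recover it while pinpointing where the ``$\Theta(n)$ weight-$1$ edges'' refinement comes from. The starting point is a reformulation of the objective. Work on the complete graph $K_N$ with edge weights in $\{1,2\}$, and let $H$ be the spanning subgraph of weight-$1$ edges. For any Hamiltonian path $P$ of $K_N$, deleting its weight-$2$ edges breaks $P$ into $w_2(P)+1$ vertex-disjoint paths that together cover $V(H)$, where $w_2(P)$ is the number of weight-$2$ edges on $P$; conversely, any partition of $V(H)$ into $k$ vertex-disjoint paths of $H$ can be linked into a Hamiltonian path of $K_N$ using exactly $k-1$ weight-$2$ edges. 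Hence $\OPT = N-2+\mathrm{mpc}(H)$, where $\mathrm{mpc}(H)$ is the minimum number of vertex-disjoint paths covering $V(H)$, and the two directions of this correspondence are L-reductions (with $\beta=1$) between $(1,2)$-TSP Path and Minimum Path Cover. The value of this lens is that the weight-$1$ edges of the produced instance are \emph{exactly} $E(H)$: if $H$ has bounded degree then $|E(H)|=\Theta(N)$, which is the promised bound.

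It therefore suffices to produce, from an APX-hard source, bounded-degree graphs $H$ on which Minimum Path Cover is hard to approximate and $\mathrm{mpc}(H)=\Theta(|V(H)|)$. I would L-reduce from Minimum Vertex Cover on graphs of some fixed maximum degree $B$, which is APX-hard and satisfies $\OPT_{\mathrm{VC}}(G)\ge |E(G)|/B=\Omega(|V(G)|)$. Given such a $G$, build $H$ from constant-size Karp-style edge gadgets, one per edge of $G$, chained around each vertex of $G$ and attached to a block of selector vertices together with $O(1)$ auxiliary vertices pinning the path endpoints; then $|V(H)|,|E(H)|=O(B\,|V(G)|)=\Theta(|V(G)|)$ and, since $G$ has bounded degree, so does $H$. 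The analysis must show $\mathrm{mpc}(H)=\Theta(\OPT_{\mathrm{VC}}(G))$ (giving L-reduction condition (i) through the reformulation) and, crucially, that any partition of $V(H)$ into $k$ vertex-disjoint paths can be decoded --- after local rerouting at an additive constant cost per ``badly traversed'' gadget --- into a vertex cover of $G$ of size at most $\OPT_{\mathrm{VC}}(G)+O(k-\mathrm{mpc}(H))$, which is condition (ii). Composing this with the reformulation of the previous paragraph yields the lemma. As an alternative, one can take the known APX-hardness of the $(1,2)$-TSP \emph{cycle} problem and lift it to the path version by the standard split-a-vertex / add-two-pendants trick, which perturbs the weight-$1$ edge count by only an additive constant and so again lands in the $\Theta(n)$ regime.

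The step I expect to be the genuine obstacle is the gap-preservation of the gadget analysis. Plain NP-hardness is easy --- a Hamiltonian path using no weight-$2$ edge exists iff $G$ has a vertex cover of the target size --- but APX-hardness needs that \emph{every} near-optimal Hamiltonian path, not just a canonical one, can be massaged so that each of its weight-$2$ edges is charged, up to a constant, to one unit of excess in the decoded vertex cover. Concretely, one must rule out adversarial traversals that enter or leave gadgets through ``wrong'' terminals in a way that lets $w_2(P)$ drift away from the cover size; designing the gadgets, the selector block, and the endpoint pinning so that such traversals are never profitable, and bounding the constants $\alpha,\beta$ they incur, is the core of the construction. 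Throughout, the bounded-degree hypothesis on $G$ is precisely what keeps every gadget count --- and hence the number of weight-$1$ edges in the final $(1,2)$-TSP Path instance --- linear in $n$, which is the ``only $\Theta(n)$ edges of weight $1$'' part of the claim.
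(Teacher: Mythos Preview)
The paper does not prove this lemma at all: it is stated with a citation to Papadimitriou and Yannakakis and then used as a black box for the subsequent L-reductions. So there is no ``paper's own proof'' to compare against; your proposal is supplying an argument where the paper simply imports a known result.

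That said, your outline is a sensible reconstruction. The reformulation $\OPT = N-2+\mathrm{mpc}(H)$ and the observation that bounded-degree $H$ forces $|E(H)|=\Theta(N)$ weight-$1$ edges are correct and are exactly what justifies the ``$\Theta(n)$ edges of weight $1$'' clause. Your two suggested routes---an L-reduction from bounded-degree Vertex Cover through gadget constructions, or lifting the APX-hardness of the $(1,2)$-TSP cycle problem to the path version---are both viable, and you have correctly identified the real work as the gap-preservation step (charging excess weight-$2$ edges on an arbitrary near-optimal path back to excess in the decoded cover). For the paper's purposes, however, none of this is needed: the lemma is simply quoted, and everything downstream only uses the existence of the APX-hard instances with $\Theta(n)$ weight-$1$ edges, not how they are built.
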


\subsection{Reduction Phase 1}
\begin{wrapfigure}[]{r}{0.33\textwidth}
    \includegraphics[width=\linewidth]{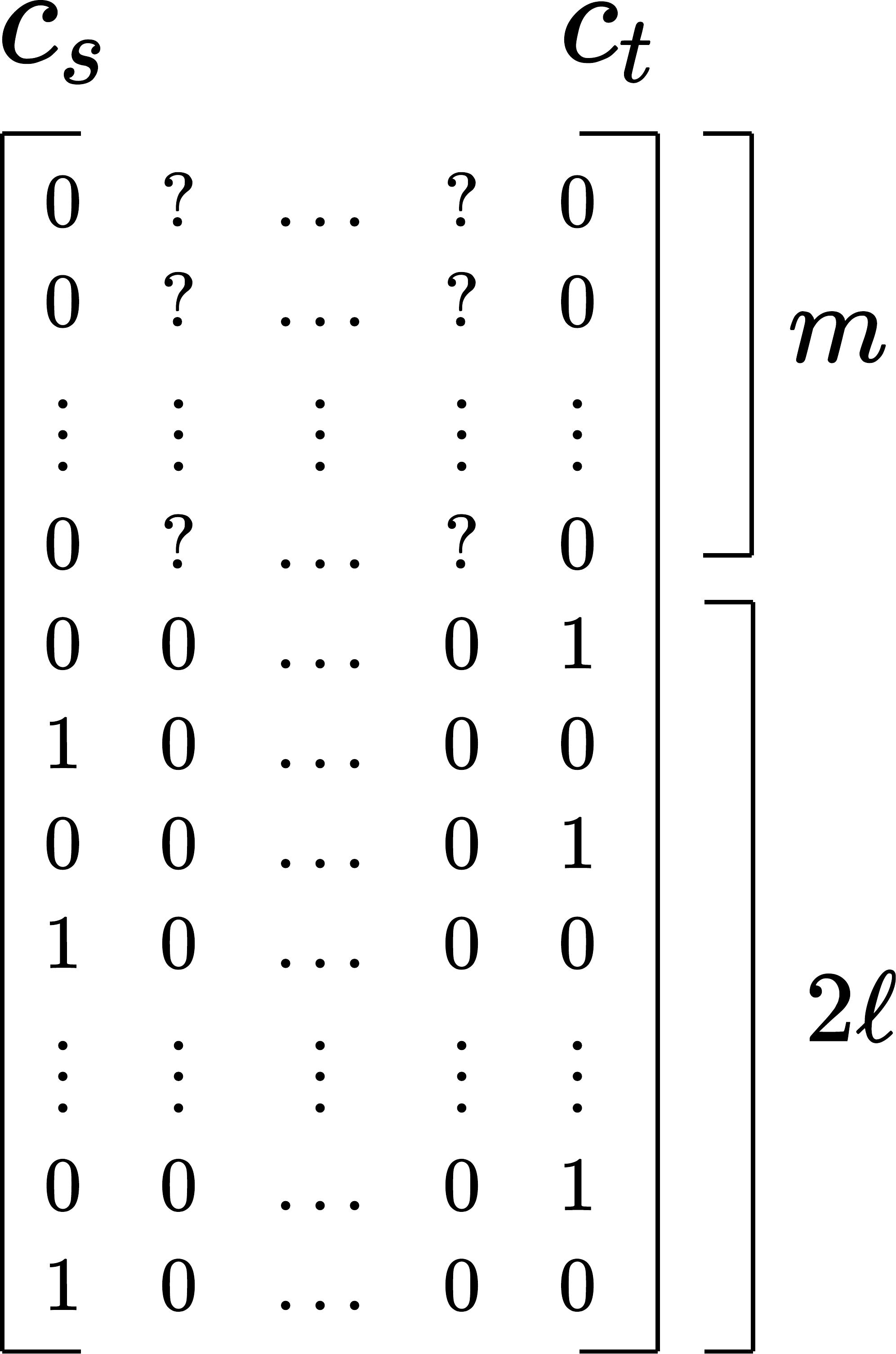}
    \caption{The modified incidence matrix for the graph $G$. Each of the first $m$ rows is for an edge. The bottom $2\ell = 8m$ rows are added as are the outer two most columns.
    }
    \label{fig:matrix_M}
\end{wrapfigure}
Given a graph with $m = \Theta(n)$ edges of weight 1 as input to (1,2)-TSP Path, remove all edges of weight 2. We call the resulting graph $G$.
Construct the incidence matrix for $G$ (a row for each edge, and column for each vertex, where the two 1's in a row indicate which two vertices are incident to the edge for that row). Then add $2\ell$ rows of all 0's to bottom of the matrix, where $\ell$ will be fixed later.
%$\ell = 4m$. 
Next, add two additional columns $c_s$ and $c_t$ 
where $c_s[i] = 1$ if $i \in \{ m+2, m+4, \hdots, m+2\ell\}$ and $0$ otherwise, and $c_t[i] = 1$ if $i \in \{m+1, m+3, \hdots, m+2\ell-1\}$ and $0$ otherwise (see Figure \ref{fig:matrix_M}). The final matrix we denote as $M$. We will call this intermediate problem Column Ordering (CO), which is: given a matrix $M$ constructed as above, find an optimal ordering on the columns so as to minimize the number of runs in its linearization. Here the linearization of $M$ under column ordering $\pi$, denoted $L(M_\pi)$, is the string obtained by ordering $M$'s columns by $\pi$ and then concatenating the rows of $M$ from top to bottom.

%The column ordering of $M$ induces a set of paths $P$
%Let $\pi$ be the ordering given to columns of $M$ after removing $c_s$ and $c_t$. 
Ignoring the added columns $c_s$ and $c_t$, the ordering $\pi$ induces a collection of disjoint paths in $G$, which we call $P$, where two vertices form an edge if their columns are adjacent and there exists a row with 1's in both columns.
Given $P$ we create a (1,2)-TSP Path by connecting the paths in $P$ with $|P|-1$ edges of weight 2.  
%by removing the two added columns then taking the vertices in the order given by the columns from left to right. If two vertices have an edge of weight 1 connecting them, we use that edge on the TSP path, otherwise we take an edge of weight 2. %We call the collection of paths obtained in this way by ignoring the edges of weight 2.

\begin{lemma}
\label{lem:CO_costs}
If $c_s$ and $c_t$ are the first and last columns of $M$ respectively, then the cost of our CO solution is $\rho(L(M_\pi)) = 2m_1 + 4(m-m_1) + 2\ell + 1$, where $m_1$ is the number of rows whose edges are in the collection of paths $P$. The corresponding cost of the solution to (1,2)-TSP Path is $m_1 + 2(n-1-m_1)$. %Recall that we count the starting character in $L(M_\pi)$ as the start of a run.
\end{lemma}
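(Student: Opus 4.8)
The plan is a careful but routine counting argument. Writing $L(M_\pi)=s_1 s_2\cdots s_{m+2\ell}$ as the concatenation of the rows of $M$ read under the column order $\pi$ that puts $c_s$ first and $c_t$ last, I will use the standard identity that the number of runs of a concatenation equals the sum of the numbers of runs of the parts minus the number of consecutive pairs $(s_i,s_{i+1})$ in which the last symbol of $s_i$ equals the first symbol of $s_{i+1}$. So the proof has two steps: (a) compute $\rho(s_i)$ for every row, and (b) count those coinciding boundaries.

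For (a): since the $c_s$- and $c_t$-entries of an edge-row are $0$ while its two $1$'s lie in vertex columns, each of the top $m$ rows has the shape $0^{a}10^{b}10^{c}$ or $0^{a}110^{c}$ with $a,c\ge 1$, hence contributes $5$ runs if its two incident vertices are non-adjacent in $\pi$ and $3$ runs if they are adjacent. By the definition of $P$, adjacency holds exactly for the $m_1$ rows whose edge lies in $P$, so the edge-rows contribute $3m_1+5(m-m_1)$ runs. Each of the $2\ell$ padding rows carries a single $1$ --- in $c_t$ for rows $m+1,m+3,\dots,m+2\ell-1$ and in $c_s$ for rows $m+2,m+4,\dots,m+2\ell$ --- so it has the shape $0^{n+1}1$ or $1\,0^{n+1}$ and contributes $2$ runs, for $4\ell$ in all.

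For (b): every edge-row starts and ends with $0$, so each of the $m-1$ edge-row/edge-row boundaries coincides, and so does the boundary between edge-row $m$ and padding-row $m+1$, which also starts with $0$. The (first, last) symbol pairs of the padding rows run $(0,1),(1,0),(0,1),\dots$, so all $2\ell-1$ internal padding boundaries coincide as well, for $m+2\ell-1$ coinciding boundaries total. Subtracting gives $\rho(L(M_\pi)) = 3m_1+5(m-m_1)+4\ell-(m+2\ell-1) = 2m_1+4(m-m_1)+2\ell+1$. The only spots that need a word of care are the degenerate row shapes --- a $1$ adjacent to $c_s$ or $c_t$, or the two $1$'s of an edge-row adjacent to each other --- but none of them changes the per-row counts.

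For the $(1,2)$-TSP Path cost: the solution obtained from $\pi$ is $P$ together with the $|P|-1$ edges that link its components into one Hamiltonian path; this path has $n-1$ edges, namely the $m_1$ edges of $P$ (each of weight $1$) and the remaining $n-1-m_1=|P|-1$ linking edges (each of weight $2$ by construction), so its weight is $m_1\cdot 1+(n-1-m_1)\cdot 2=m_1+2(n-1-m_1)$. I do not expect a real obstacle here: the lemma is entirely bookkeeping, the one point genuinely worth checking being that the alternating $c_s/c_t$ pattern in the padding block forces all $2\ell-1$ padding boundaries (and the one just above them) to coincide for every ordering $\pi$, which is precisely why the additive $2\ell+1$ term does not depend on $\pi$.
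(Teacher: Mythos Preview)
Your argument is correct and essentially the same bookkeeping as the paper's, just organized differently: the paper counts each row's \emph{contribution} (new runs created upon appending, with a final ``$+1$'' for the starting character), whereas you compute $\sum_i\rho(s_i)$ and then globally subtract the $m+2\ell-1$ coinciding boundaries. Both routes are one-line arithmetic once the row shapes are identified; your version has the minor advantage of making explicit why every boundary coincides (the alternating $(0,1),(1,0),\dots$ pattern in the padding block), which is exactly the point that pins down the $2\ell+1$ term independently of $\pi$.
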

\begin{proof}
Ignoring the starting character of $L(M_\pi)$ for the moment,
every row in $M$ corresponding to an edge in $P$ contributes two runs to $\rho (L(M_\pi))$ (e.g. $0 \hdots 0110 \hdots 0$). Any row whose edge is not in $P$ and not in the bottom $2\ell$ rows, contributes four (e.g. $0\hdots 010 \hdots 010 \hdots 0$) and there are $m-m_1$ such (rows) edges.  The extra $2\ell$ rows all contribute one run. Adding the `$+ 1$' term for the start of $L(M_\pi)$ gives the desired expression.
The second statement follows from the TSP path having $m_1$ edges of weight 1 and the overall number of edges needed to form a Hamiltonian path.
\end{proof}

\noindent
We now fix the value $\ell = 4m$.
\begin{lemma}
\label{lem:non_ends_sub_opt}
If $c_s$ and $c_t$ are not the first and last columns respectively, then the solution to CO is sub-optimal.
\end{lemma}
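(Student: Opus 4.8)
The plan is to show that the $2\ell$ rows added at the bottom of $M$ form a gadget so strong that forcing $c_s$ and $c_t$ into the extreme positions dominates everything else. Write the linearization as $L(M_\pi)=E\cdot D$, where $E$ is the concatenation of the top $m$ edge rows and $D$ is the concatenation of the bottom $2\ell$ rows, so that $\rho(L(M_\pi))=\rho(E)+\rho(D)-\varepsilon$ with $\varepsilon\in\{0,1\}$ equal to $1$ exactly when the last symbol of $E$ coincides with the first symbol of $D$. Every edge row has at most two $1$s, so $E$ has at most $2m$ runs of $1$s and hence $\rho(E)\le 4m+1$ under any ordering; the argument therefore comes down to controlling $\rho(D)$.

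First I would determine how $\rho(D)$ depends on the positions of $c_s$ and $c_t$. The bottom rows alternate between a ``$c_t$-row'' (a lone $1$ in column $c_t$) and a ``$c_s$-row'' (a lone $1$ in column $c_s$), beginning with a $c_t$-row and ending with a $c_s$-row. The $1$ of one bottom row merges with the $1$ of the next only if the earlier row has its $1$ in the last column and the later row has its $1$ in the first column; since $c_s\neq c_t$, at most one of the two adjacency types (``$c_t$-then-$c_s$'' or ``$c_s$-then-$c_t$'') can ever trigger a merge. The alternating pattern has $\ell$ adjacencies of the first type and $\ell-1$ of the second, so: if $c_s$ is first and $c_t$ is last, all $\ell$ type-one adjacencies merge, $\rho(D)=2\ell+1$, and $D$ starts with a $0$; if $c_t$ is first and $c_s$ is last, the $\ell-1$ type-two adjacencies merge, again $\rho(D)=2\ell+1$ but now $D$ starts with a $1$; in every other placement of $c_s,c_t$ no bottom adjacency merges, $D$ has $2\ell$ runs of $1$s, and $\rho(D)\ge 4\ell-1$.

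With these counts I would split into two cases. If $\pi$ is neither ``$c_s$ first, $c_t$ last'' nor ``$c_t$ first, $c_s$ last'', then $\rho(L(M_\pi))\ge\rho(D)-1\ge 4\ell-2$, while any ordering $\pi'$ with $c_s$ first and $c_t$ last satisfies $\rho(L(M_{\pi'}))=\rho(E')+\rho(D')-1=\rho(E')+2\ell\le 4m+1+2\ell$ (the boundary merge occurs because $E'$ ends with the $0$ of column $c_t$ while $D'$ starts with a $0$). Substituting $\ell=4m$ turns this into $16m-2$ versus $12m+1$, and $16m-2>12m+1$ for every $m\ge 1$, so $\pi$ is sub-optimal. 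In the remaining case ($c_t$ first, $c_s$ last) the gadget loses only one run, so I would instead compare $\pi$ directly against the ordering $\pi'$ obtained by transposing $c_s$ and $c_t$ and leaving the vertex columns fixed. That transposition changes no edge row, because every edge row carries a $0$ in both of these columns and so is $0\cdot v\cdot 0$ in either ordering; hence $E'=E$. Both $D$ and $D'$ then have run-count $2\ell+1$, but $D$ starts with a $1$ (no boundary merge, since $E$ ends with $0$) whereas $D'$ starts with a $0$ (one boundary merge), giving $\rho(L(M_\pi))=\rho(E)+2\ell+1>\rho(E)+2\ell=\rho(L(M_{\pi'}))$, so $\pi$ is sub-optimal here too. (Once $c_s$ is first and $c_t$ is last, the cost is exactly the expression in Lemma~\ref{lem:CO_costs}.)

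The step I expect to be the main obstacle is precisely this last case: since the bottom gadget is then only one run worse than in the intended layout, the coarse ``make $\ell$ large'' accounting is useless, and one must pin down the run-counts exactly and, crucially, observe that repairing the ordering via a single transposition of $c_s$ and $c_t$ leaves the top $m$ rows literally unchanged --- which is what makes the one-run difference in the bottom block decisive. Every other case is comfortably absorbed by the slack coming from $\ell=4m$.
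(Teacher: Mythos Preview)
Your proof is correct and follows essentially the same two-case split as the paper's own argument. For the swap case ($c_t$ first, $c_s$ last) you and the paper both transpose $c_s$ and $c_t$, noting that the edge rows are unchanged and exactly one extra run is incurred. For the non-extremal case the paper uses the coarser lower bound that the bottom rows contribute at least $3\ell$ runs and compares this directly against the optimal cost via $2m_1^*+4(m-m_1^*)+2\ell+1 < 4m+2\ell \le 3\ell$; your bound $\rho(D)\ge 4\ell-1$ is sharper, but the slack from $\ell=4m$ makes either one suffice.
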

\begin{proof}
If $c_t$ is first and $c_s$ is last, then one extra run in contributed over $c_s$ being first and $c_t$ last, while maintain the rest of the ordering to be the same. In any configuration where either $c_s$ or $c_t$ are not ends of the matrix the bottom rows will contribute at least $3\ell$.
Letting $m_1^*$ denote the optimal number of edges of $P$, then
\begin{align*}
2m_1^* + 4(m-m_1^*) + 2\ell + 1 < 4m + 2\ell \leq 3\ell.
\end{align*}
Note that the first inequality is strict since we can always find at least one edge for $P$.
\end{proof}
It is immediate from Lemmas \ref{lem:CO_costs} and \ref{lem:non_ends_sub_opt} that an optimal solution for CO is one which maximizes $m_1$, and this provides an optimal solution for (1,2)-TSP Path. We now must show that our reduction is also an L-reduction. Lemmas \ref{lem:L-red_first_and_last} and \ref{lem:L-red_not_first_and_last} consider the two possible cases.
% \begin{lemma}
% \label{lem:opt_TSP_cost}
% An optimal solution of CO yields an optimal solution (1,2)-TSP
% \end{lemma}
% \begin{proof}
% Immediate from Lemmas \ref{lem:CO_costs} and \ref{lem:non_ends_sub_opt}.
% \end{proof}

\begin{lemma}
\label{lem:L-red_first_and_last}
If $c_s$ and $c_t$ are the first and last columns respectively in a solution to CO, then the L-reduction conditions hold.
\end{lemma}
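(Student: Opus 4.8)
The plan is to check the two L-reduction inequalities from Section~\ref{sec:preliminaries} for $A$ the (1,2)-TSP Path problem, $B$ the CO problem, with $f_A$ the matrix construction of this subsection and $g_B$ the map that turns a column ordering into the Hamiltonian path obtained by joining the paths of $P$ with weight-$2$ edges; throughout we are in the case of the lemma, i.e.\ the CO solution $y' = \pi$ has $c_s$ first and $c_t$ last. Write $m_1 = m_1(\pi)$ for the number of edges of the linear forest $P$ induced by $\pi$ on $G$, and let $m_1^{*}$ be the maximum number of edges in a linear forest (vertex-disjoint union of paths) of $G$. The first and main step is to pin both optima to this single quantity. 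On the TSP side, a linear forest with $k$ edges on $n$ vertices has exactly $n-k$ components, hence extends to a Hamiltonian path via $n-k-1$ weight-$2$ edges of total cost $k + 2(n-1-k) = 2(n-1)-k$, and conversely the weight-$1$ edges of any Hamiltonian path form a linear forest; so $\OPT_A(x) = 2(n-1) - m_1^{*}$. On the CO side, plugging $\ell = 4m$ into Lemma~\ref{lem:CO_costs} gives $\rho(L(M_\pi)) = 12m - 2m_1 + 1$ for every ordering with $c_s$ first and $c_t$ last; moreover $m_1 \le m_1^{*}$ always (as $P\subseteq G$ is a linear forest), while arranging the paths of a maximum linear forest consecutively among the $n$ vertex columns forces $m_1 = m_1^{*}$, and Lemma~\ref{lem:non_ends_sub_opt} rules out all other orderings as optimal. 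Hence $\OPT_B(f_A(x)) = 12m - 2m_1^{*} + 1$.

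With this, condition~(i) is a short computation. From $m_1^{*} \le n-1$ we get $\OPT_A(x) = 2(n-1) - m_1^{*} \ge n-1$, while $\OPT_B(f_A(x)) \le 12m + 1$. Invoking Lemma~\ref{1,2-TSP-lemma} to take $m = \Theta(n)$, say $m \le c_1 n$, yields $\OPT_B(f_A(x)) \le 12c_1 n + 1 \le \alpha(n-1) \le \alpha\,\OPT_A(x)$ for a suitable constant $\alpha$. This is exactly the place where the ``only $\Theta(n)$ edges of weight $1$'' part of Lemma~\ref{1,2-TSP-lemma} is essential: without a linear bound on $m$, the ratio $\OPT_B/\OPT_A$ need not be bounded.

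Condition~(ii) then follows by substitution. By Lemma~\ref{lem:CO_costs}, $c_B(y') = 12m - 2m_1 + 1$ and $c_A(g_B(y')) = m_1 + 2(n-1-m_1) = 2(n-1) - m_1$, so
$$ c_A(g_B(y')) - \OPT_A(x) = m_1^{*} - m_1 \qquad\text{and}\qquad c_B(y') - \OPT_B(f_A(x)) = 2(m_1^{*} - m_1). $$
Since $m_1 \le m_1^{*}$, both differences are nonnegative and $c_A(g_B(y')) - \OPT_A(x) = \tfrac12\big(c_B(y') - \OPT_B(f_A(x))\big)$, so~(ii) holds with $\beta = \tfrac12$ (in fact with equality). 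Combined with Lemma~\ref{lem:L-red_not_first_and_last}, which handles the remaining solutions $y'$, this establishes the L-reduction.

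The delicate step is not any single inequality but the bookkeeping behind $\OPT_A(x) = 2(n-1) - m_1^{*}$ and $\OPT_B(f_A(x)) = 12m - 2m_1^{*}+1$: one must verify that the correspondence between linear forests of $G$, column orderings of $M$ placing $c_s,c_t$ at the ends, and Hamiltonian-path completions is faithful in both directions, so that both optima are genuinely affine in the single parameter $m_1^{*}$. This is where the specific choice $\ell = 4m$, the ``$+1$'' boundary run of $L(M_\pi)$, and the component count of a linear forest all have to line up; a miscount there would damage either condition~(i) or the exact constant in~(ii).
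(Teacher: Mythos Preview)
Your proof is correct and follows essentially the same route as the paper's: both pin the two optima down as affine functions of $m_1^{*}$ using Lemma~\ref{lem:CO_costs} and Lemma~\ref{lem:non_ends_sub_opt}, invoke $m=\Theta(n)$ and $m_1^{*}\le n-1$ for condition~(i), and reduce condition~(ii) to $m_1^{*}-m_1 \le 2\beta(m_1^{*}-m_1)$ with $\beta=\tfrac12$. Your version is slightly more explicit in justifying why the (1,2)-TSP optimum equals $2(n-1)-m_1^{*}$ via the linear-forest/Hamiltonian-path correspondence, but the substance is the same.
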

\begin{proof}
By Lemmas \ref{lem:CO_costs} and \ref{lem:non_ends_sub_opt}, the optimal cost for the instance of CO can be expressed as $2m_1^* + 4(m-m_1^*) + 2\ell + 1$ and the optimal cost for the instance of (1,2)-TSP Path as $m_1^* + 2(n-1-m_1^*)$. To prove Condition (i), we need to show there exists an $\alpha > 0$ such that
\begin{align*}
2m_1^* + 4(m-m_1^*) + 2\ell + 1 \leq \alpha (m_1^* + 2(n-1-m_1^*)).
\end{align*}
Since $m = \Theta(n)$ there exists a constant $C$ such that for $n$ large enough $m \leq Cn$ and the left hand side can be bound above by (recall $\ell = 4m$)
\begin{align*}
    2(m_1^* + 4(Cn-m_1^*)) + 8Cn + 1 = 16Cn - 6m_1^* + 1
\end{align*}
and since $m_1^* \leq n-1$
%the right hand side can be rewritten as $2\alpha(n-1) - \alpha m_1^*$ 
it is easy to find such an $\alpha$ (for example $\alpha = 32C + 1$ when $n\geq 2$). For Condition (ii) we need a constant $\beta > 0$ such that
%we need to show there exists a constant $\beta >0$ such that for any $m_1$ where $m_1$ the number of adjacent vertex in $P$ for any solution
\begin{align*}
    m_1 + &2(n-1-m_1) - (m_1^* + 2(n-1-m_1^*))\\ \leq &\beta(2m_1 + 4(m-m_1) + 2\ell + 1-(2m_1^* + 4(m-m_1^*) + 2\ell + 1)).
\end{align*}
This can be rewritten as $m_1^* - m_1 \leq 2\beta(m_1^* - m_1)$, which is true for any $\beta \geq 1/2$.
\end{proof}

\begin{lemma}
\label{lem:L-red_not_first_and_last}
If $c_s$ and $c_t$ are not the first and last columns respectively in a solution to CO, the L-reduction conditions still hold.
\end{lemma}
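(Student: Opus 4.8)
The plan is to reuse Lemma~\ref{lem:L-red_first_and_last} and only fill the gap. Condition~(i) of the L-reduction involves only the two optimal costs, and by Lemma~\ref{lem:non_ends_sub_opt} an optimal CO solution always places $c_s$ and $c_t$ at the two extreme positions; hence Condition~(i) has already been established and does not depend on the particular solution $y'$ (a column ordering in which $c_s,c_t$ are not at the ends) we are handed. So the only remaining task is to verify Condition~(ii) for such $y'$, with the optimal costs still being $2m_1^*+4(m-m_1^*)+2\ell+1$ for CO and $m_1^*+2(n-1-m_1^*)$ for $(1,2)$-TSP Path as in Lemma~\ref{lem:L-red_first_and_last}. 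I would split along the same two cases used in the proof of Lemma~\ref{lem:non_ends_sub_opt}.

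\emph{The swapped configuration ($c_t$ first, $c_s$ last).} Here, for every ordering of the remaining vertex columns, the bottom $2\ell$ rows behave exactly as in the optimal configuration apart from the single extra transition at the join between the last edge-row and the first all-zero row identified in Lemma~\ref{lem:non_ends_sub_opt}; thus $\rho(L(M_\pi)) = 2m_1+4(m-m_1)+2\ell+2$, with $m_1$ as in Lemma~\ref{lem:CO_costs}, and $g_B(y')$ is, just as before, a $(1,2)$-TSP path of cost $m_1+2(n-1-m_1)$. Plugging these into Condition~(ii) and cancelling, the inequality becomes $m_1^*-m_1 \le \beta\bigl(2(m_1^*-m_1)+1\bigr)$, which holds for every $\beta \ge 1/2$ since $0 \le m_1 \le m_1^*$ --- the same $\beta$ used in Lemma~\ref{lem:L-red_first_and_last}.

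\emph{The interior configuration (at least one of $c_s,c_t$ is neither first nor last).} The crux is a strong enough lower bound on $\rho(L(M_\pi))$. View the bottom $2\ell$ rows as $\ell$ consecutive copies of the two-row block ``$c_t$-row followed by $c_s$-row''. The row of whichever special symbol sits strictly in the interior has its single $1$ flanked by $0$'s on both sides, so it contributes two internal run boundaries instead of one; and, since that row both begins and ends with $0$, a brief case check on the position of the other special symbol shows the bottom block carries at least $4\ell-1$ run boundaries (the extra $\approx\ell$ comes either from an interior $1$ in the other special row, or from boundary transitions between that $0$-ending/$0$-beginning row and its neighbours). With $\ell=4m$ this gives $\rho(L(M_\pi)) \ge 4\ell = 16m$, while the optimal CO cost $2m_1^*+4(m-m_1^*)+2\ell+1 = 12m-2m_1^*+1$ is at most $12m+1$, so $c_B(y')-(\text{opt.\ CO cost}) \ge 4m-1$. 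On the $(1,2)$-TSP side, $g_B(y')$ is still a valid Hamiltonian path --- join the induced vertex-disjoint paths by weight-$2$ edges --- and its excess over the optimum equals $\bigl(m_1+2(n-1-m_1)\bigr)-\bigl(m_1^*+2(n-1-m_1^*)\bigr) = m_1^*-m_1 \le n-1$. Since $m=\Theta(n)$ there is an absolute constant $c_1>0$ with $m \ge c_1 n$, so for all sufficiently large $n$ the right side of Condition~(ii) is at least $2c_1 n$ while the left side is at most $n$; choosing $\beta = \max\{1/2,\ 1/(2c_1)\}$ then satisfies Condition~(ii) in both configurations simultaneously and completes the L-reduction.

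The step I expect to be the main obstacle is the interior configuration: one must show the bottom block contributes \emph{strictly} more than the $2\ell$ boundaries it contributes in the optimal ordering, and by an additive amount that is $\Theta(m)=\Theta(n)$, so that the CO-cost gap dominates the only linear-in-$n$ degradation on the TSP side. The bound ``$\ge 3\ell$'' of Lemma~\ref{lem:non_ends_sub_opt}, which sufficed merely for sub-optimality, is not enough on its own; the roughly $\ell$ extra boundary transitions that appear where the interior symbol's rows meet their neighbours must be accounted for carefully. Once that stronger bottom-block estimate is in hand, the rest is routine algebra with the cost formula of Lemma~\ref{lem:CO_costs}.
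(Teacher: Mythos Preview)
Your argument is correct and uses the same two-scenario split (swapped ends versus an interior placement) as the paper. The paper's version is terser: instead of recomputing costs, it simply notes that in either scenario the CO cost of the given ordering is at least that of the Lemma~\ref{lem:L-red_first_and_last} ordering with the same induced $m_1$, while the derived TSP cost is identical, so Condition~(ii) carries over with an only-larger right-hand side.

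One correction to your commentary: the $3\ell$ bound from Lemma~\ref{lem:non_ends_sub_opt} \emph{is} already enough. With $c_B(y')\ge 3\ell$ one has
\[
c_B(y')-\OPT_B \;\ge\; 3\ell-\bigl(4m-2m_1^*+2\ell+1\bigr)\;=\;2m_1^*-1,
\]
and since $m_1^*\ge 1$ and $m_1^*-m_1\le m_1^*$, the choice $\beta=1$ gives $m_1^*-m_1\le 2m_1^*-1\le \beta\bigl(c_B(y')-\OPT_B\bigr)$. This $\beta$ is also compatible with Lemma~\ref{lem:L-red_first_and_last} and with your swapped case, so no sharper bottom-block estimate is required. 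Your $4\ell$ bound is correct and makes the inequality comfortable, but the extra case analysis is unnecessary.
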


\begin{proof}
Condition (i) holds since the optimal solution values to the overall problem have not changed. 
\noindent
For Condition (ii), we consider the two scenarios:
\begin{itemize}
    \item \textbf{Scenario 1}: $c_s$ or $c_t$ are not at the far ends of $M$.  Then the cost of the solution for CO, which is at least $3\ell$, exceeds the cost for any solution considered in Lemma \ref{lem:L-red_first_and_last}. Furthermore, any corresponding solution for (1,2)-TSP Path was already considered in Lemma \ref{lem:L-red_first_and_last}, where now the right-hand is larger than it was in Lemma \ref{lem:L-red_first_and_last}.
    \item \textbf{Scenario 2}: $c_t$ is the first column of $M$ and $c_s$ is the last. Then, again, we have already considered a solution in Lemma \ref{lem:L-red_first_and_last} which has solution cost one less for CO and yet had the same solution cost for (1,2)-TSP Path.
\end{itemize}
This completes the proof. 
\end{proof}
% If $c_s$ and $c_t$ are not at the far ends of the ordering, then the cost of the solution for CO, which is at least $3\ell$, exceeds the cost for any solution 
% %where the added columns are first and last. Hence the value of the right-hand side is larger than any in 
% considered in Lemma \ref{lem:L-red_first_and_last}. Furthermore, any corresponding solution for (1,2)-TSP Path was already considered in Lemma \ref{lem:L-red_first_and_last}, where now the right-hand is larger then it was in Lemma \ref{lem:L-red_first_and_last}.
% At the same time, if $c_t$ is first and $c_s$ is last then again we have already considered a solution in Lemma \ref{lem:L-red_first_and_last} which has solution cost one less for CO and yet had the same solution cost for (1,2)-TSP Path.
% \begin{align*}
%  m_1 + 2m_2 - (m_1^* + m_2^*) \leq 3\ell - (2m_1^* + 3m_2^* + 4(m-(n+1)) + 2\ell)
% \end{align*}

We now have our L-reduction from (1,2)-TSP to column ordering, the Phase 1 of our reduction. 

\subsection{Reduction Phase 2}
Given the matrix $M$ as constructed in Phase 1 from $G$, we will now construct a string $T$ as input to the problem AO. It is easier to describe $T$ in terms of its substrings, which are created by iterating through the matrix $M$ as follows:

\begin{itemize}
    \item  For $1\leq j \leq n$, $1 \leq i \leq m + 2\ell$: 
     if $M_{i,j} = 1$ output the substring $10^{i+1}2C_j$
        %\item if $i \leq m'$ and if $M_{i,j} = 0$ output the empty string $\epsilon$
        %\item if $i = m'+1$  output the substring  $0^{m'+2}2C_j$

    \item For $1\leq j \leq n$:
            output the substring  $0^{m+ 2\ell + 2}2C_j$
    \item Append to each substring created above a unique $\$_i$ symbol ($1 \leq i \leq 2m+2\ell+n$).
\end{itemize}

The string $T$ is the concatenation of these substrings in any order. We refer the reader to Figures \ref{fig:red_values} and \ref{fig:red_values_full} for illustrations which will be helpful in understanding future arguments.
The alphabet set $\Sigma$ is $\{0,1,2\}\cup \{C_1, C_2, \hdots, C_n\} \cup \{\$_1, \$_2, \hdots, \$_{2m + 2\ell + n}\}$ and its size $\sigma = 3+2n+2m + 2\ell$.
% The idea behind this reduction is that the symbol $C_j$ corresponds to the $j^{th}$ column of $M$. In an optimal ordering of $\Sigma$ the ordering given to the symbols $C_1$, $C_2$, $\hdots$, $C_n$ gives an optimal ordering of the columns of $M$. 

Given a solution $\pi$ to this instance of AO we use the relative ordering given to the $C_i$ symbols as the ordering for the columns of $M$.
For the analysis of why this works, we define some properties which we would like $BWT(T)$ and $\pi$ to have. For any symbol $a \in \Sigma$ we will call the maximal set of indices where the $F$ column of the sorted circular shift matrix has only $a$'s as the \emph{$a$-block}. We let $C_s$ and $C_t$ denote the symbols for columns $c_s$ and $c_t$ respectively.\\

\noindent
\textbf{Desired Properties:}
\begin{enumerate}
    \item For a fixed $j$, all $C_j$ symbols are placed adjacently in $BWT(T)$;
    \item All $2$ symbols are placed adjacently in $BWT(T)$;
    \item The symbol $2$ is immediately preceding the symbol $0$ in $\pi$;
    \item In $\pi$ the $\$_i$ symbols are ordered in such a way as to minimize the number of runs of 1 in the 0-block.
    \item The symbols $C_s$ and $C_t$ are both positioned at the beginning and end respectively of the alphabet ordering given to the $C_i$ symbols.
\end{enumerate}

\noindent
Let $r_0$ denote the number of runs created in the 0-block, minus the number of $\$_i$ symbols in the 0-block. 
%For all of the following Lemmas it may be helpful for the reader to refer to Figure \ref{fig:red_values_full} to picture the interaction between the BWT and the alphabet ordering.
\begin{lemma}
\label{lem:AO_no_desired_suboptimal}
Unless all of the desired properties hold, the solution to AO is suboptimal.
\end{lemma}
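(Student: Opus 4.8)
The plan is to argue by a sequence of exchange arguments, showing that if any one of the five desired properties fails, we can modify either $BWT(T)$ (via a legal change of the ordering $\pi$) or the combinatorial structure induced by $\pi$ so as to strictly decrease $\rho(BWT(T))$, hence witnessing suboptimality. I would organize the argument so that the properties are treated roughly in the order 1, 2, 3, then 4 and 5, because the earlier properties are structural facts about how the gadget substrings $10^{i+1}2C_j$ and $0^{m+2\ell+2}2C_j$ interact under the LF-map, and the later ones are optimizations \emph{assuming} the earlier structure is in place.

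First I would analyze Properties 1 and 2. Each substring of $T$ has the form (prefix of $0$'s and at most one leading $1$) followed by a single $2$, a single $C_j$, and a unique terminator $\$_i$. Because every $\$_i$ is unique, the sorted circular shifts are essentially grouped by the symbol that \emph{precedes} each position, and I would use the LF-map-as-labeled-path viewpoint from the Preliminaries: the relative order of equal symbols is preserved, so all occurrences of a fixed $C_j$ in the $L$ column form a contiguous block determined by sorting the contexts $2 C_j \$_i \cdots$, and since the character right after $C_j$ is always a $\$_i$ (all distinct, hence the tie among $C_j$'s is broken the same way regardless of $\pi$), the $C_j$'s are automatically adjacent in $BWT(T)$ — Property 1 holds for free, or more precisely its failure would already mean we were not looking at the true $BWT$. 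The same reasoning applied to the symbol $2$, whose only right-neighbors are the $C_j$'s, gives Property 2: all $2$'s are adjacent in $BWT(T)$. For Property 3, I would observe that the $2$'s all sit in the $F$-column region corresponding to shifts beginning with $C_j$'s, and they form a single run; whether this run of $2$'s merges with an adjacent run in the $BWT$ depends on what symbol class borders the $C$-blocks, and the only way to save a run is to have the $0$-block adjacent to the $2$-block in the $F$-column ordering, i.e.\ $2$ immediately before $0$ in $\pi$. Any other placement of $2$ in $\pi$ strictly adds at least one run, so optimality forces Property 3.

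With Properties 1–3 established, Properties 4 and 5 are pure optimization statements about the $0$-block, which is where all the action tracked by Lemma~\ref{lem:CO_costs} lives. Property 4 is true essentially by definition: the runs of $1$'s inside the $0$-block are governed by the relative order of the $\$_i$ symbols (each row of $M$ contributes a $1$ into the $0$-block at a position indexed by its $\$_i$), so a $\pi$ not minimizing these runs, with everything else fixed, is beaten by the $\pi'$ that sorts the $\$_i$'s optimally — a legal modification since it only permutes special symbols. Property 5 is the analogue of Lemma~\ref{lem:non_ends_sub_opt} at the string level: $C_s$ and $C_t$ play the role of $c_s$ and $c_t$, and by the same counting (using $\ell = 4m$) the cost of any solution in which $C_s,C_t$ are not the extreme $C$-symbols is dominated by the enormous contribution of the $2\ell$ all-zero rows, so such a solution cannot be optimal. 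I would phrase 5 by transferring the inequality $2m_1^* + 4(m-m_1^*) + 2\ell + 1 < 3\ell$ from Lemma~\ref{lem:non_ends_sub_opt} through the correspondence between $\pi$'s ordering of the $C_i$'s and column orderings of $M$, and through the yet-to-be-proved identity (this is really the next lemma) relating $\rho(BWT(T))$ to $\rho(L(M_\pi))$.

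The main obstacle I anticipate is Properties 1 and 2 — specifically, verifying rigorously that the $C_j$-blocks and the $2$-block are \emph{unconditionally} contiguous and independent of $\pi$, so that the only $\pi$-dependence funnels into the $0$-block. This requires carefully tracking, via the LF-map picture, exactly which contexts sort where, and confirming that the unique terminators $\$_i$ break all relevant ties in a $\pi$-independent way; the risk is an off-by-one or an overlooked context (e.g.\ the wrap-around of a circular shift, or a short substring) that secretly reintroduces $\pi$-dependence into a block we claimed was fixed. Once that structural claim is nailed down, the suboptimality arguments for 3, 4, and 5 are short exchange/counting arguments of the type already appearing in Lemmas~\ref{lem:CO_costs} and \ref{lem:non_ends_sub_opt}.
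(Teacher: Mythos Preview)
There is a genuine gap in your treatment of Properties~1 and~2. You assert that they hold ``for free'' --- that all occurrences of a fixed $C_j$ (resp.\ all occurrences of $2$) are automatically contiguous in $BWT(T)$ regardless of $\pi$. This is false. An occurrence of $C_j$ sits in the $L$ column at a row whose $F$-context begins with the $\$_i$ that follows that particular $C_j$ in $T$. Different occurrences of the same $C_j$ are followed by \emph{different} $\$_i$'s, and whether those $\$_i$'s are consecutive in $\pi$ (hence whether the $C_j$'s are adjacent in $L$) is entirely determined by how $\pi$ orders the $\$$-symbols relative to one another. Likewise, the $2$'s sit in the $C_j$-blocks of $F$; if $\pi$ interleaves some $C_j$'s with $0$, $1$, or $\$$-symbols, the $2$'s split. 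So Properties~1--2 are genuinely $\pi$-dependent, and your plan to ``verify rigorously that the $C_j$-blocks and the $2$-block are unconditionally contiguous'' cannot succeed.

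The paper's approach is instead an exchange argument: if Properties~1--3 fail, one can reorder $\pi$ (grouping the $\$_i$'s appropriately, making the $C_j$'s consecutive, placing $2$ just before $0$) so that these properties hold \emph{while preserving the relative order of LF-mappings entering the $0$-block}, hence preserving $r_0$; the change then strictly reduces runs outside the $0$-block. Your write-up needs this mechanism. Secondarily, your plan for Property~5 --- transferring Lemma~\ref{lem:non_ends_sub_opt} through the identity $\rho(BWT(T)) = \rho(L(M_\pi)) + \text{const}$ --- is circular, since that identity is Lemma~\ref{lem:desired_cost} and is proved \emph{assuming} Property~5 already holds. The paper instead does a direct count inside the $0$-block: with $C_s,C_t$ extremal one gets $r_0 \le 12m+3$, whereas non-extremal placement forces the $8m$ ones from the bottom $2\ell$ rows to be isolated, giving $r_0 \ge 16m+1$.
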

\begin{proof}
If any of Properties 1-3 are violated, we can exchange our solution with one which maintains the value $r_0$ but reduces the runs created in other blocks. This is since the alphabet ordering can be modified to have these properties, while at the same time maintaining the relative orderings of LF-mappings 
%(arrows in Figure \ref{fig:bwt_example}) 
into the 0-block.
In the case of Property 4, given that Properties 1-3 hold, modifying the solution so that the property holds can only decrease $r_0$, while it maintains the number of runs created in other blocks.

Assuming properties 1-4 hold, there are two possibilities, either $C_s$ and $C_t$ are extremal or they are not. 
%The first possibility is that $C_s$ is first and $C_t$ is last in the ordering relative to the $C_i$ characters. 
In the case of being extremal, by Property 4 the $2\ell = 8m$ instances of 1's in the bottom $2\ell$ rows of $M$ shall correspond to $4m$ runs of two consecutive 1's (if $C_s < C_t$) or $4m-1$ runs of two consecutive 1's and 2 runs of lonely 1's (if $C_t < C_s$) in the 0-block of $BWT(T)$ (by rearranging $\$_i$'s as needed). The upper rows of $M$ shall correspond to at most $2m$ runs of 1's in the 0-block of $BWT(T)$. Hence, in the 0-block there are at most $6m+1$ runs of 1's making at most $6m + 2$ runs of zeros to surround them, making $r_0 \le 12m + 3$.
%The second possibility is to 
In the case where $C_s$ and $C_t$ are not extremal, there are $8m$ runs of lonely 1's in the 0-block of $BWT(T)$, at least $8m+1$ runs of 0's to surround them, and a $\$_i$ symbol for each column, leading to $r_0 \ge 16m + 1$. 
%between $0$ and $m$ alignments of two consecutive 1's in the 0-block of $BWT(T)$ from either between adjacent upper rows of $M$ or within upper rows of $M$ (in combination).
It is evident to choose the first option which necessitates $C_s$ and $C_t$ to be extremal.
%last in the ordering relative to the $C_i$ characters, proving Property 5.
Choosing $C_t < C_s$ forms one more run in $BWT(T)$ with respect to the bottom rows of $M$ compared to $C_s < C_t$. However, the minimum number of runs in $BWT(T)$ corresponding to the upper rows of $M$ remain unaffected. 
%Property 5 is violated then moving the symbols to the ends of the ordering reduces $r_0$ and, assuming Properties 1-4 hold, the number of runs created in other blocks is unaffected by changing the positioning of $C_s$ and $C_t$ in the alphabet ordering.
\end{proof}
\begin{figure}
    \begin{minipage}{.5\textwidth}
    \centering
    \includegraphics[width=1\textwidth]{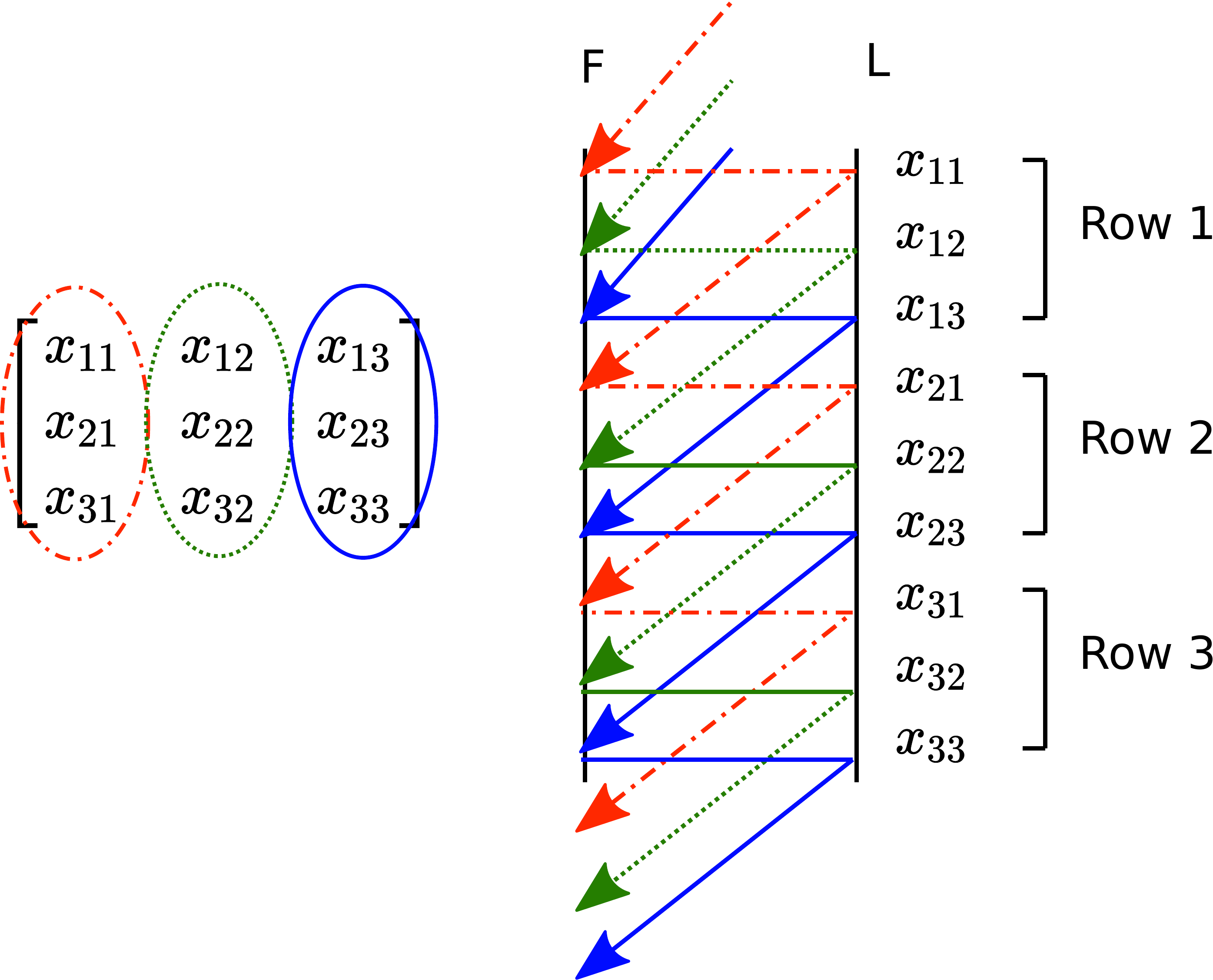}
    \end{minipage}
    \hspace{1em}%
    \begin{minipage}{.5\textwidth}
    \centering
    \includegraphics[width=1\textwidth]{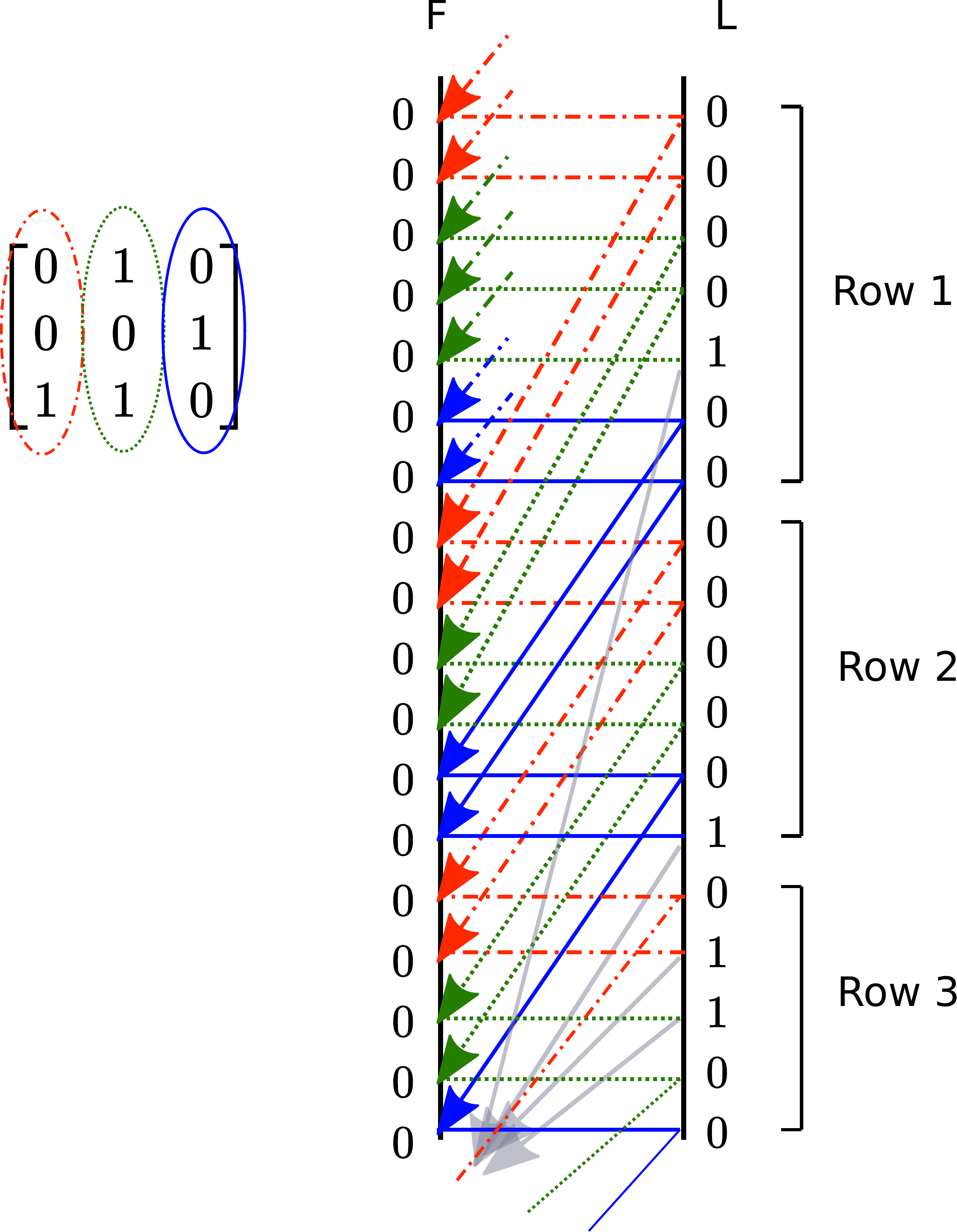}
    \end{minipage}
    \caption{On the left - a simplistic view of how we would like to simulate the linearization of $M$. On the right - the column bundling technique necessary to perform the simulation. Note that this matrix has a run of three 1's, a situation that will not arise if Properties 1-5 are met.}
    \label{fig:red_values}
\end{figure}
\begin{lemma}
\label{lem:desired_cost}
If all Properties 1-5 hold, then $r_0 = \rho(L(M_\pi)) - 1$ and $\rho(BWT(T)) = r_0 + \sigma - 1$.
\end{lemma}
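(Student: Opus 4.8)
The plan is to read $BWT(T)$ one $F$-column block at a time and to combine the runs created inside each block with the coincidences at the seams between consecutive blocks; Properties 1--5 pin the block structure down tightly enough that the ``$-1$'' and the ``$+\sigma-1$'' drop out of this bookkeeping.

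For the first identity I would analyse the $0$-block. Since Property~3 puts $2$ immediately before $0$ in $\pi$, the rotations beginning with $0$ sort primarily by the length $a$ of their leading run of $0$'s, with $a$ increasing down the block; this splits the $0$-block into an $a=1$ band whose $L$-cells are all $0$, one band for each row of $M$ (the band with $a=i+1$ encoding row $i$), and a final band ($a=m+2\ell+2$) whose $L$-cells are the pairwise distinct $\$_i$'s terminating the column substrings $0^{m+2\ell+2}2C_j$. Inside the band of row $i$ the rotations group, in $\pi$-order of the $C_j$'s, into one bundle per column: the bundle of column $j$ carries a single $1$ in $L$ exactly when $M_{i,j}=1$ (that rotation is preceded by the leading $1$ of $10^{i+1}2C_j$) and is all $0$'s otherwise, and Property~4 is exactly what lets us order the $\$_i$'s so that, whenever $M_{i,j}=M_{i,j'}=1$ for $\pi$-consecutive columns $j,j'$, the two $1$'s fall at the touching ends of their bundles and merge, reproducing the run structure of the row (a row of $M$ never has three consecutive $1$'s, so this is always achievable -- see Figure~\ref{fig:red_values}). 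Collapsing each row-band to the corresponding row of $L(M_\pi)$, noting that the $a=1$ band merely lengthens the first $0$-run and that the final band contributes exactly one singleton $\$_i$-run per column substring, a careful count against the endpoints of $L(M_\pi)$ yields $r_0 = \rho(L(M_\pi)) - 1$.

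For the second identity I would traverse the $F$-column once. Every block other than the $0$-block has a trivial $L$-column: a $\$_i$-block is a single cell holding the $C_j$ of its substring; a $C_j$-block reads a run of $2$'s (each $C_j$ is preceded only by $2$); the $2$-block reads a run of $0$'s; and the $1$-block reads the pairwise distinct $\$_i$'s that precede the substrings $10^{i+1}2C_j$. Hence $\rho(BWT(T))$ equals the sum of the per-block run counts minus the number of seams whose two incident $L$-cells coincide, and with Properties 1--5 one checks that the coinciding (``good'') seams are exactly the internal seams of each maximal group of $\$_i$-blocks sharing a column (one fewer than the group size), the internal seams of the consecutive run of $C_j$-blocks (both sides reading $2$), and the single seam where the $2$-block meets the $0$-block (both sides reading $0$, the purpose of Property~3); the boundary cells of the $0$-, $1$- and $C$-blocks are $\$_i$'s, $0$, or $2$ and cannot match across any remaining seam. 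Summing the per-block counts and subtracting these good seams telescopes to $r_0 + \sigma - 1$.

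I expect the crux to be the bundling analysis of the $0$-block: showing that the leading-zero ordering genuinely splits it into the intended bands and bundles, and, above all, that a single global ordering of the $\$_i$'s can simultaneously align the $1$'s at bundle boundaries in every band, so that a run of $1$'s in a row of $M$ becomes exactly one run of $1$'s in $BWT(T)$ and never the longer run warned about in Figure~\ref{fig:red_values}. The seam bookkeeping of the last paragraph is then routine once that block structure is in hand.
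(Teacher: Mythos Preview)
Your proposal is correct and follows essentially the same approach as the paper. Your direct analysis of the sorted rotations---stratifying the $0$-block into bands by the length of the leading $0$-run and then, inside each band, into per-column bundles ordered by the $C_j$'s---is exactly the paper's ``column bundle'' picture viewed from the $F$-column rather than from the LF-mapping side; the paper phrases the same structure in terms of directed paths (``threads'') bound together by the $C_j$ symbols and flowing through the $0$-block (see Figures~\ref{fig:red_values} and~\ref{fig:red_values_full}). Your identification of the crux---that Property~4 together with the at-most-two-consecutive-$1$'s guarantee lets a single global $\$_i$-ordering push each $1$ to the correct end of its bundle in every band simultaneously---matches the paper's ``Lining up runs of $1$'s'' paragraph, and your seam bookkeeping for the second identity is a more explicit version of the paper's one-line observation that every symbol except $1$ contributes exactly one run outside the simulation.
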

\begin{proof}
We will first show that when Properties 1-5 hold $r_0 = \rho(L(M_\pi)) - 1$. Our aim will be to have a substring of $BWT(T)$ within the 0-block which is the same as $L(M_\pi)$ except for the lengths of its runs. We will call this substring the simulation of $L(M_\pi)$.
Ideally, we could have the situation as illustrated in the left-hand side of Figure \ref{fig:red_values}.  If each element in $M$ were 0 this would work, thanks to the LF-mappings coming from the same symbol are order preserving (not crossing). But, since there are 1's in $M$, there will be LF-mappings leaving the 0-block. We will show how to create the simulation of $L(M_\pi)$ next.

\textbf{`Column bundles':} We simulate $L(M_\pi)$ with a collection of directed paths which we will call a ``column bundle''.
A column bundle for column $j$ is the  collection of all directed paths (recall Section \ref{sec:preliminaries} where we viewed the LF-mapping as creating a directed path) created by the substrings of the form $\{0,1\}^*2C_j$. In addition, we shall call the directed path created by each individual substring of this form a ``thread''.
These $C_j$ symbols essentially bind these threads together. Using this technique we are now able to simulate $L(M_\pi)$ as in the right-hand side of Figure \ref{fig:red_values}. A collection of adjacent 0's in the same column bundle now represents $x_{ij} = 0$, and a 1 occurring in the simulation is represented by a single thread which leaves the 0-block by mapping into the 1-block, while the rest of its column's bundle continues in the simulation. In addition, for each column there exists a thread consisting of entirely 0's for the length of the simulation, preventing runs from being missed.
\begin{figure}[H]
    \centering
    \includegraphics[width=\textwidth]{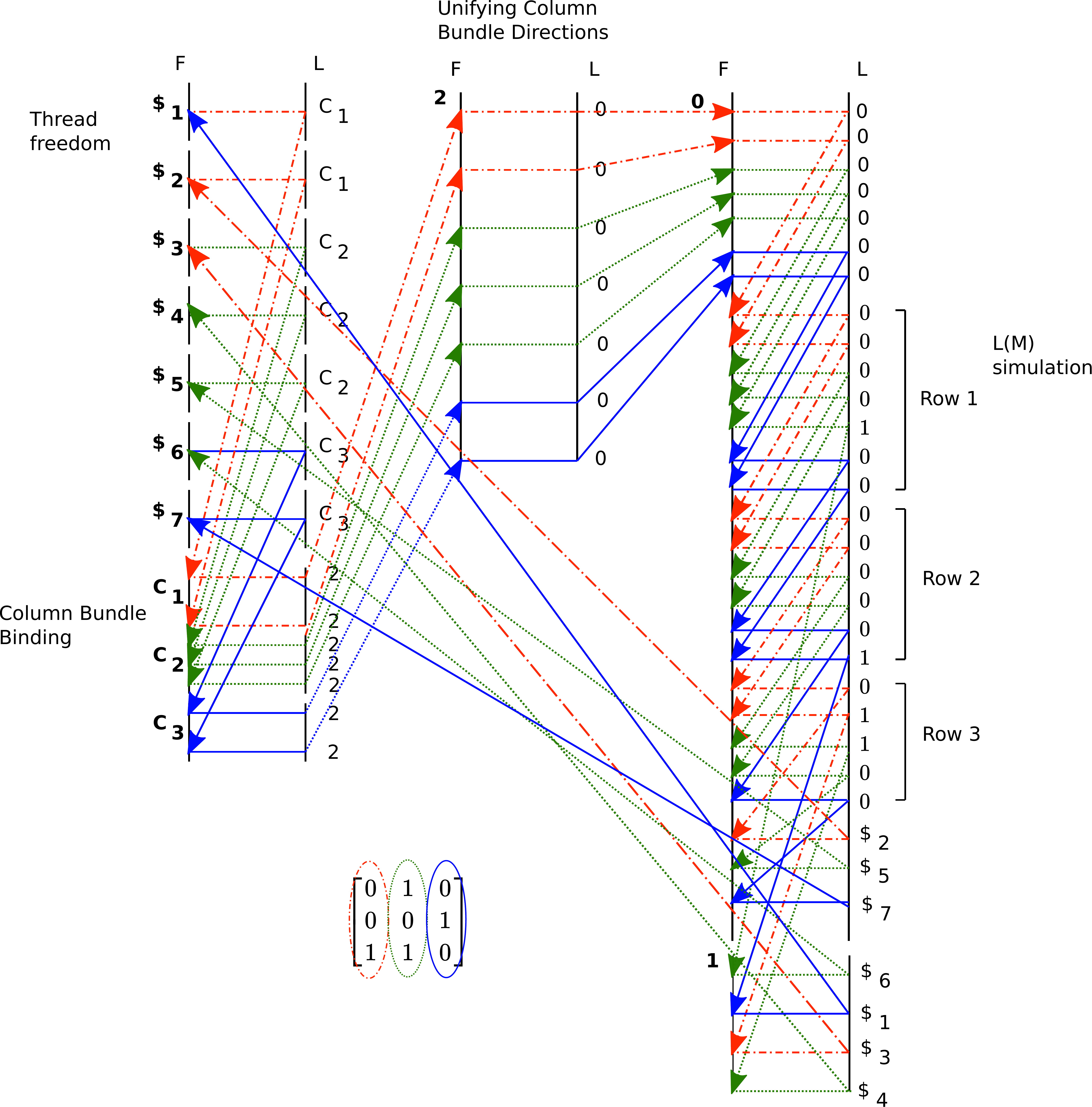}
    \caption{The full simulation of the matrix $M$ is shown. The string resulting from the reduction is $10^32C_3\$_70^52C_3\$_610^22C_2\$_50^52C_2\$_410^42C_2\$_310^42C_1\$_20^52C_1\$_1$. Column bundles are indicated by color and line style. The roles of each part of the reduction are labeled above. `Threads' are allowed to move freely within the column bundles thanks to $\$_i$ characters. The $C_i$ characters bind these bundles together. The 2 symbol unifies the direction of the column bundles. The simulation of $L(M_\pi)$ where $\pi$ is the ordering given by $C_i$ symbols takes place in the 0-block.}
    \label{fig:red_values_full}.
\end{figure}
\textbf{Lining up runs of 1's:} At the point of a 1's appearance in the simulation of $L(M_\pi)$, the corresponding thread is still adjacent to other threads with $0$'s within its column bundle. This highlights a key property of our reduction. \emph{By using the added columns $c_s$ and $c_t$ we ensure that in any solution that we have to carefully consider there are only runs of at most two 1's in $L(M_\pi)$}. This fact allows for our column bundle technique to place these 1's adjacently in the simulation. Moreover, by Property 4 we are ensured that when it is possible the 1's are placed adjacently. Indeed, the role of the $\$_i$ symbols is to allow the threads to move freely within each bundle making Property 4 possible.

\newpage
\textbf{Unifying Column Directions:} What remains is to ensure that the collection of column bundles interacts properly with the $0$-block. This is done using the 2-block which guarantees the column bundles reach the 0-block from the same direction.  See Figure \ref{fig:red_values_full}. The `$- 1$' term arises since by Property 3 the start of the simulation is not counted as a run.

%(there were 0's preceding it in the 2-block).

% 
% Explain how $\$$ symbols all 1's of runs to be aligned as long as property 5 holds so each run of 1's has length at most two. Property 4 guarantees they line up when possible.
% \begin{figure}[h]
%     \centering
%     \includegraphics[width=0.3\textwidth]{}
%     \caption{}
%     \label{fig:red_values_closeup}
% \end{figure}

%Note that this only works because Desired Property 5 guarantees each run of 1's has length at most two and Desired Property 4 ensures these order in a run when this is possible.
The fact that $\rho(BWT(T)) = r_0 + \sigma - 1$ follows from Properties 1-3 which cause every symbol except `1' to contribute exactly one run to $\rho(BWT(T))$ outside of the simulation (1's first appearance is within the simulation).
\end{proof}

\begin{lemma}
\label{lem:AO_desired_L-red}
If all Properties 1-5 hold, then the L-reduction conditions are satisfied.
\end{lemma}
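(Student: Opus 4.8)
The plan is to chain together the quantitative relationships already established. From Lemma~\ref{lem:desired_cost} we have, for any solution $\pi$ satisfying Properties 1--5, that $\rho(BWT(T)) = r_0 + \sigma - 1$ and $r_0 = \rho(L(M_\pi)) - 1$, so $\rho(BWT(T)) = \rho(L(M_\pi)) + \sigma - 2$; and by Lemma~\ref{lem:AO_no_desired_suboptimal} any solution not satisfying all the properties can be replaced by one that does without increasing the cost, so the optimum of AO on $T$ equals $\OPT_{\mathrm{CO}}(M) + \sigma - 2$. Since $\sigma = 3 + 2n + 2m + 2\ell$ is a constant depending only on the CO instance (hence known at reduction time), the map $f_{\mathrm{CO}}$ sending $M$ to $T$ shifts both the optimum and every feasible cost by the same additive constant $\sigma - 2$. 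The recovery map $g$ takes a solution $\pi$ for AO, first replaces it (in polynomial time, following the exchange argument in Lemma~\ref{lem:AO_no_desired_suboptimal}) by one satisfying Properties 1--5 of no greater cost, then reads off the induced column ordering for CO.

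First I would verify Condition (i): $\OPT_{\mathrm{CO}}(M) = \OPT_{\mathrm{AO}}(T) - (\sigma - 2)$, and since $\OPT_{\mathrm{AO}}(T) = \OPT_{\mathrm{CO}}(M) + \sigma - 2$ and both quantities are $\Theta(n) = \Theta(m)$ (the CO optimum is at least $2\ell + 1 = 8m+1$ by Lemma~\ref{lem:CO_costs} and Lemma~\ref{lem:non_ends_sub_opt}, and $\sigma - 2 = O(m)$), there is a constant $\alpha$ with $\OPT_{\mathrm{AO}}(f_{\mathrm{CO}}(M)) \le \alpha\,\OPT_{\mathrm{CO}}(M)$. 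Then for Condition (ii): given an AO solution $y'$ with $g(y') = y$, the exchange argument gives $c_{\mathrm{CO}}(y) + (\sigma - 2) \le c_{\mathrm{AO}}(y')$, and combined with $\OPT_{\mathrm{AO}}(f_{\mathrm{CO}}(M)) = \OPT_{\mathrm{CO}}(M) + (\sigma - 2)$ we get
\begin{align*}
c_{\mathrm{CO}}(g(y')) - \OPT_{\mathrm{CO}}(M) \le c_{\mathrm{AO}}(y') - \OPT_{\mathrm{AO}}(f_{\mathrm{CO}}(M)),
\end{align*}
so Condition (ii) holds with $\beta = 1$.

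The main obstacle is making the exchange argument of Lemma~\ref{lem:AO_no_desired_suboptimal} \emph{quantitatively tight} rather than merely "not worse": for the L-reduction I need that passing from an arbitrary $\pi$ to one satisfying Properties 1--5 never increases $\rho(BWT(T))$, and that the resulting CO cost is exactly $\rho(BWT(T)) - (\sigma - 2)$ with no slack, so that approximation error transfers one-to-one. This requires checking that each of the five property-repair steps is genuinely non-increasing on the total run count (Properties 1--3 leave $r_0$ fixed while not increasing other blocks; Property 4 can only decrease $r_0$; Property 5 is handled by the extremal-versus-not case analysis already done), and that outside the simulation every symbol other than $1$ contributes exactly one run — which is precisely what Lemma~\ref{lem:desired_cost} asserts. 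Once that bookkeeping is in place, Conditions (i) and (ii) are immediate, and composing with the Phase~1 L-reduction from (1,2)-TSP Path (Lemmas~\ref{lem:L-red_first_and_last} and~\ref{lem:L-red_not_first_and_last}) together with the composability of L-reductions and Lemma~\ref{1,2-TSP-lemma} yields APX-hardness of AO, and hence Theorem~\ref{thm:AO_NPC}.
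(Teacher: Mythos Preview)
Your argument is correct and follows essentially the same approach as the paper: both establish that when Properties 1--5 hold, the AO cost and the CO cost differ by the fixed additive constant $\sigma-2$, observe that $\sigma-2=\Theta(m)=\Theta(\OPT_{\mathrm{CO}}(M))$ to get Condition~(i), and then verify Condition~(ii) with $\beta=1$ since the additive shift cancels on both sides. The only differences are cosmetic: the paper writes the costs as $r_0^*+\sigma-1$ and $r_0^*+1$ rather than via $\rho(L(M_\pi))$, and it keeps the recovery map $g$ as simply ``read off the relative order of the $C_i$'' (deferring the case where Properties 1--5 fail to the next lemma), whereas you fold the repair step into $g$ itself. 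Your discussion of the repair step's tightness and of composing with Phase~1 is correct but lies outside the scope of this particular lemma, which only concerns the case where the five properties already hold.
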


\begin{proof}
By Lemma's \ref{lem:AO_no_desired_suboptimal} and \ref{lem:desired_cost} we have the optimal cost for AO being $r_0^* + \sigma - 1$ and optimal cost for CO as $r_0^* + 1$. For Condition (i) note that $\sigma = \Theta(n)$ and $m+2\ell \leq r_0 \leq 5(m+2\ell)$, so that $r_0^* = \Theta(n)$ ($n$ still referring to the number of vertices in $G$). Hence, we can find an $\alpha$ such that $r_0^* + \sigma - 1 \leq \alpha (r_0^* + 1)$. For Condition (ii) with $\beta = 1$ we have $r_0 + 1 - (r_0^* + 1) \leq r_0+\sigma - 1 - (r_0^* + \sigma - 1)$.
\end{proof}

\begin{lemma}
If any of Properties 1-5 do not hold, then also the L-reduction conditions are satisfied.
\end{lemma}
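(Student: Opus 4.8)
The plan is to follow the same pattern as Lemma~\ref{lem:L-red_not_first_and_last}, treating the ``bad'' solutions in two regimes. Condition~(i) is independent of the particular solution $y'$: it only compares the two optimal costs, which are exactly those computed in Lemma~\ref{lem:AO_desired_L-red}, so it continues to hold with the same constant $\alpha$. Before handling Condition~(ii) I would record two uniform facts. First, every column ordering of $M$ has linearization cost $\Theta(n)$ --- this is the bound $m+2\ell \le r_0 \le 5(m+2\ell)$ already used in Lemma~\ref{lem:AO_desired_L-red} together with $m = \Theta(n)$ --- so $c_{\mathrm{CO}}(g_{\mathrm{AO}}(y')) - \OPT_{\mathrm{CO}} = O(n)$ no matter how $y'$ behaves. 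Second, $\OPT_{\mathrm{AO}} = r_0^* + \sigma - 1 \le 12m + \sigma + 2$, using the bound $r_0^* \le 12m + 3$ implied by the extremal-case analysis in Lemma~\ref{lem:AO_no_desired_suboptimal}.

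I would then split into two cases according to where $C_s$ and $C_t$ land in the ordering of the $C_i$ symbols read off from $y'$. If one of $C_s,C_t$ falls in the interior, then by the counting in Lemma~\ref{lem:AO_no_desired_suboptimal} --- applied, if necessary, after first repairing Properties~1--3, which does not increase $c_{\mathrm{AO}}$ and leaves the relative $C_i$-order intact, so that the identity $\rho(BWT(T)) = r_0 + \sigma - 1$ of Lemma~\ref{lem:desired_cost} is available --- we get $c_{\mathrm{AO}}(y') \ge 16m + \sigma$; combined with the two uniform facts this gives $c_{\mathrm{AO}}(y') - \OPT_{\mathrm{AO}} \ge 4m - 2 = \Omega(n)$ against an $O(n)$ CO-side gap, so Condition~(ii) holds for a large enough constant $\beta$ (the analogue of Scenario~1). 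Otherwise $C_s$ and $C_t$ occupy the two ends of the $C_i$-ordering, and since not all of Properties~1--5 hold, $y'$ must violate one of Properties~1--4 or place $C_t$ before $C_s$. In this case I would repair $y'$ to a solution $y''$ meeting all of Properties~1--5 by successively fixing Properties~1,2,3,4 and, if needed, reversing the orientation of $C_s,C_t$; by Lemma~\ref{lem:AO_no_desired_suboptimal} no step increases $c_{\mathrm{AO}}$, and the only step that changes $g_{\mathrm{AO}}$ is the orientation reversal, which by Lemma~\ref{lem:non_ends_sub_opt} changes the linearization cost by exactly $1$. Applying Lemma~\ref{lem:AO_desired_L-red} to $y''$ and absorbing this $\pm 1$ --- using that $c_{\mathrm{AO}}(y') - \OPT_{\mathrm{AO}} \ge 1$ here, because each of these violations already forces at least one extra run in $BWT(T)$ --- yields Condition~(ii) with $\beta = 2$ (the analogue of Scenario~2). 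Setting $\beta$ to the maximum of the two constants finishes the argument.

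The step I expect to be the main obstacle is the repair bookkeeping: checking carefully that fixing Properties~1--4 --- which only rearranges the symbols $0$, $2$, and the $\$_i$'s within the alphabet ordering --- genuinely leaves the relative order of the $C_i$ symbols unchanged, hence preserves whether $C_s$ and $C_t$ are extremal or interior, so that $g_{\mathrm{AO}}$ of the repaired solution differs from $g_{\mathrm{AO}}(y')$ by at most one orientation swap of $c_s$ and $c_t$. This is intuitively forced by the column-bundle picture behind Lemma~\ref{lem:desired_cost}, but it is the one place where the argument needs more than a restatement of earlier facts; the remainder is routine constant chasing.
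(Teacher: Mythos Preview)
Your proposal is correct and broadly parallels the paper's argument, but the two differ in how the case split is organised and in how the ``interior'' situation is dispatched. The paper splits first by \emph{which} property fails: if one of Properties~1--4 is violated it simply observes that repairing them (keeping the same $\rho(L(M_\pi))$) can only shrink the right-hand side of Condition~(ii), so the inequality already established in Lemma~\ref{lem:AO_desired_L-red} carries over; if Properties~1--4 hold but Property~5 fails, it handles the $C_t<C_s$ end-swap by a $\pm 1$ bookkeeping identical to yours, and for $C_s$ or $C_t$ interior it argues directly that the simulation in the $0$-block can only do the same or worse than $L(M_\pi)$ (runs of three 1's cannot be merged), so $r_0\ge \rho(L(M_\pi))-1$ and $\beta=1$ already suffices. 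You instead split by the position of $C_s,C_t$ in the induced $C_i$-order, and in the interior case trade the paper's one-line monotonicity observation for an explicit $\Omega(n)$-versus-$O(n)$ gap computation using the $16m+1$ and $12m+3$ bounds from Lemma~\ref{lem:AO_no_desired_suboptimal}. Your route is a bit heavier on constants but makes the repair bookkeeping (that fixing Properties~1--4 leaves the relative $C_i$-order untouched) fully explicit, which the paper leaves implicit; the paper's route is terser and yields a uniform $\beta=1$ rather than two separate constants. Either way the same skeleton---repair to a ``good'' solution without increasing $c_{\mathrm{AO}}$ or changing $g_{\mathrm{AO}}$, then invoke Lemma~\ref{lem:AO_desired_L-red}---is what drives the argument.
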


\begin{proof}
Condition (i) is satisfied since optimal values for the overall problem are unchanged.  For Condition (ii),   
if any of the first four properties are violated, we have already shown in Lemma \ref{lem:AO_desired_L-red} that the inequality holds in the harder case where $\rho(L_{M_\pi})$ has the same value but all of the desired properties are satisfied. If the first four properties hold and the fifth property does not hold, there are two cases. In the first case, if $C_t$ is ordered first and $C_s$ last, then swapping $C_s$ and $C_t$ modifies both sides of the inequality for Condition (ii) by the same amount. In the second case, if either $C_s$ or $C_t$ are not ordered first or last, the simulation can only perform the same or worst than in the corresponding solution for CO, as now there may exist runs of length three which can not be simulated. %then $r_0 + 1 > \rho(L(M_\pi))$ where $\pi$ is any ordering where $C_s$ and $C_t$ are extremal. 
%The $BWT(T)$ outside of the simulation adds at least $\sigma - 1$ runs to $\rho(BWT(T))$, so we have $\rho(BWT(T)) \geq \rho(L(M_\pi)) + 1 + \sigma - 1$, larger than any solution consider in Lemma \ref{lem:AO_desired_L-red}.
\end{proof}

%\noindent
We have shown an L-reduction from (1,2)-TSP to AO
(with CO as an intermediate problem).
This combined with Lemma~\ref{1,2-TSP-lemma} completes 
 the proof for Theorem \ref{thm:AO_NPC}.\\

\noindent
\textbf{Proof of Corollary \ref{cor:AO_exp_time}:}
Assuming ETH,
%the exponential time hypothesis (ETH),
there exists no $2^{o(n)}$ time algorithm for Hamiltonian Path Problem~\cite{cygan2015lower}. Our reduction allows us to determine the minimum number of paths in $G$ needed to cover all the vertices and can hence solve Hamiltonian Path. Since the alphabet size $\sigma$ is linear in $n$, an $2^{o(\sigma)}$ time algorithm for AO would imply an $2^{o(n)}$ time algorithm for Hamiltonian Path. At the same time $|T| = \Theta(n^2)$ so that a $2^{o(\sqrt{|T|})}$ time algorithm for AO would again imply an $2^{o(n)}$ time algorithm for Hamiltonian Path. Therefore an algorithm for AO running in time $2^{o(\sigma + \sqrt{|T|})}$ would contradict ETH.

\subsection{Source Ordering on Wheeler Graphs} 
\label{subsec_SO}
We present a brief overview here and details are deferred to Appendix \ref{appendix:wheeler}. To \emph{define the BWT transform from a Wheeler graph $G$ to a string, $BWT(G)$}, we assume a proper ordering on the vertices. We label each vertex in $G$ by its departing edge labels. If a vertex has multiple edge labels leaving it, we consider all possible orderings of its labels and take the one which gives the minimal number of runs. In the proof of Lemma \ref{lem:desired_cost}, by replacing the $C_i$ symbols with sources and constructing the same paths leaving each source as the paths leaving $C_i$ we can obtain Theorem \ref{thm:SO_NPC}. An optimal ordering on the sources provides a minimum ordering of the columns of $M$.

\newpage
\section{Constrained Alphabet Ordering}
\label{sec:DO}

\begin{wrapfigure}[]{r}{0.45\textwidth}
%\vspace{-3em}
\includegraphics[width = 0.45\textwidth]{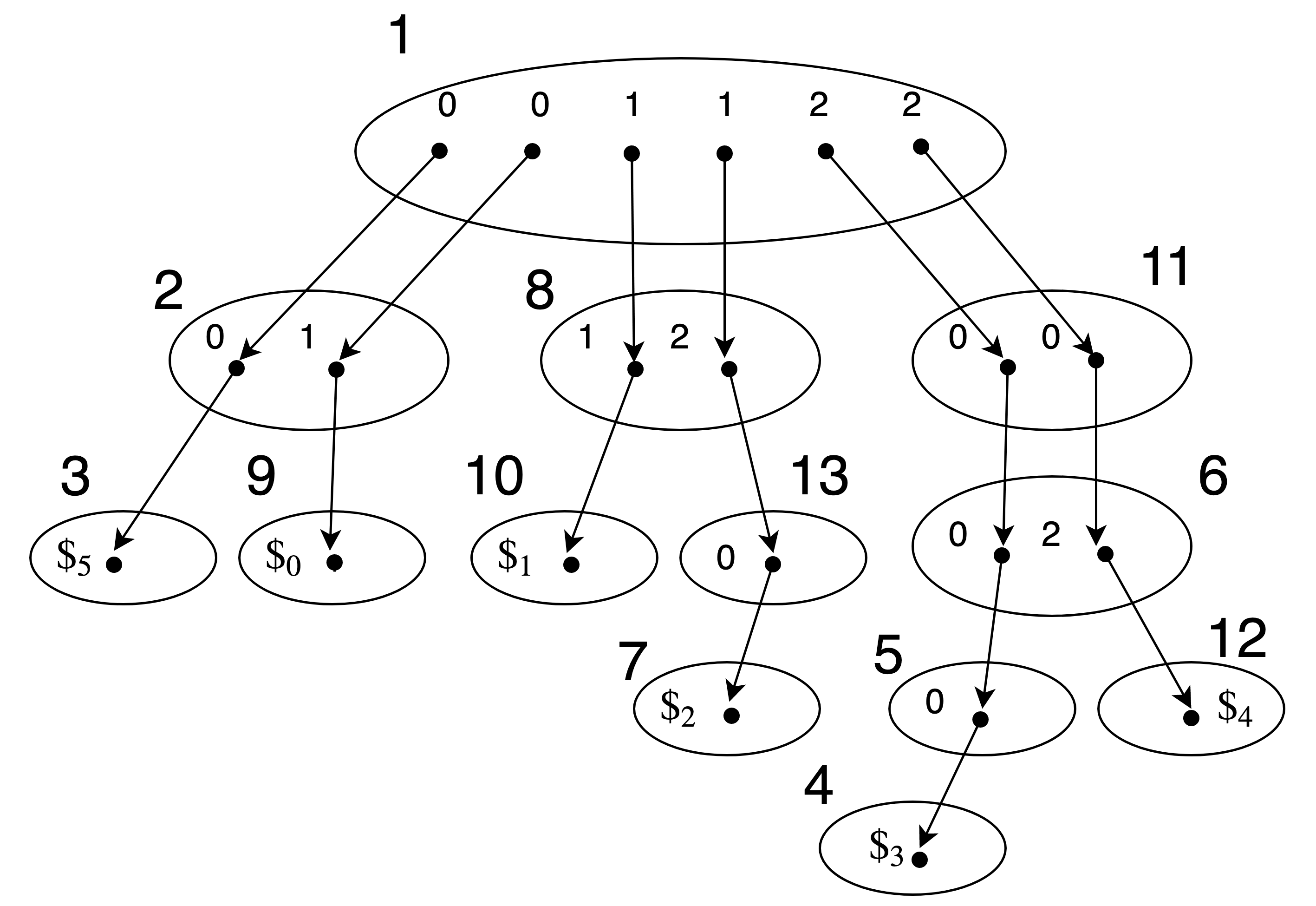}
\caption{The tree for the sting $BWT(T)$ where
$T = 00\$_0 10 \$_1 11 \$_2 021 \$_3 0002 \$_4 202 \$_5$. The order of each block is shown in larger font.}
\label{fig:block_tree}
%\vspace{-1em}
\end{wrapfigure}

Recall that we wish to find an ordering on the special symbols $\$_0, \hdots, \$_{d-1}$ such that the number of runs in the BWT of $T = T_0\$_0 \hdots T_{d-1}\$_{d-1}$ is minimized. We transform this problem into the problem of ordering a set of paths. This is done by creating a directed path for each substring, $T_i$, and placing these paths in a tree structure where the vertices are grouped into `blocks' which are determined by the labels on their incoming path. The CAO problem then becomes equivalent to finding the starting position of these paths in the root block of the tree. See Figure \ref{fig:block_tree} for an example of such a tree. We wish for an ordering of the paths to minimize the number of runs within blocks, as well as the number of runs between adjacent blocks.
We formally define these ideas. 

% reintroduce problem

%Note, that by the convention introduced by Gagie \emph{et. al.} in his work on Wheeler graphs we consider the paths as having edges in the opposite direction as the characters in the string~\cite{DBLP:journals/tcs/GagieMS17}.
For the string $T_i = t_1 t_2 \hdots t_n$ we consider the directed path $P_i$ with vertices labeled 
from beginning to end with symbols $t_n$ $\rightarrow t_{n-1}$ $\hdots$ $ \rightarrow t_1$ $\rightarrow \$_{i-1}$, where $\$_{i-1} = \$_{d-1}$ when $i =0$. For a vertex $v$ in $P_i$, let $str(v)$ denote the string formed by concatenating labels on path $P_i$ from beginning up to, but excluding $v$ (the empty string is possible). 
%Let $P = \cup P_i = (V, E)$. 
The block with label $s$, or $B_s$ is defined as $B_s = \{v  \in V: str(v) = s \}$. We consider a block $B_s$ as having a block child $B_{sa}$ which consists of all the vertex being mapped to $B_{sa}$ from $B_s$ with the label $a$. The root of the tree is the block with the empty string as a label. The blocks are ordered by the lexicographic order of the reverse of their strings. 
%Edges in the tree with the same label should not cross so that 
%The ordering of the receiving vertices in a block child is the same as the relative ordering of the respective sending vertices in the block parent.

The blocks of $T = T_0\$_0 \hdots T_{d-1}\$_{d-1}$ can be determined in linear time. This can be done by (i) choosing an arbitrary order on the special symbols, (ii) constructing the BWT while maintaining the original text position in $T$ for each entry, and (iv) constructing the longest common extension structure for $T$. A longest common extension query takes as arguments two indices $i$ and $j$ and returns the length of the longest common sub-string of $T$ starting at $i$ and $j$. The data structure can be constructed in linear time and performs queries in constant time~\cite{DBLP:journals/jda/IlieNT10}. In a linear scan of $BWT(T)$, we can identify where blocks begin using the longest common extension structure. 
%See Algorithm \ref{alg:DO} for more details.
% \begin{lemma}
% \label{lem:block_child}
% For any tree as constructed above with $d$ paths, the number of blocks with multiple block children is at most $d - 1$.
% \end{lemma}

% \begin{proof}
% By induction on $d$. Let $p_\sigma(d)$ be the maximum number of blocks with multiple block children amongst all possible collections of $d$ paths. It is easy to verify that $p_\sigma(1) = 0$ and $p_\sigma(2) = 1$. Assume that $p_\sigma$ is known on $[d-1] = \{1,2,\dots,d-1\}$. Now consider having $d$ paths. Consider the first instance where the paths split into blocks of sizes $s_1, \dots, s_j > 0$ for some $2 \le j \le \sigma$. Then for the tree formed by our $d$ paths, the total number of blocks with multiple block children is at most $1 + \sum_{i=1}^j p_\sigma(s_i) = 1 + d - j \le d - 1$. In fact, a tree with $d$ paths formed by splitting a lonely branch at each level will have precisely $d-1$ blocks with multiple block children.
% \end{proof}

%We observe the following: The blocking scheme of $BWT(T_1 \$_1 \hdots T_k \$_k)$ is the same \emph{regardless} of the ordering of $\$_1, \hdots \$_k$, but permuting that ordering influences the ordering of vertices \emph{within} their blocks. Thus 

Notice that within each block we may permute the ordering of the vertices so that vertices with the same label are consecutive within a block with no effect on the ordering of vertices in the rest of the tree. Therefore, in finding the optimal ordering, we may view each block as a ``tuple", each holding a subset of alphabet symbols appearing only once per tuple. The order in which the tuples are listed is determined by the ordering on the blocks. For example, in Figure \ref{fig:block_tree} the resulting tuples are $(0,1,2)(0,1)(\$_5)(\$_3) (0)(0,2)(\$_2)(1,2)(\$_0)(\$_1)(0)(\$_4)(0)$.  We define a new problem:

\begin{problem}[Tuple Ordering (TO)]
Given a list of tuples $t_1, \hdots, t_q$ in a fixed order, each containing a subset of symbols from $\Sigma$, order the elements in each tuple such the total number of runs in the string formed by their concatenation $t_1 \cdot t_2 \cdot \hdots \cdot t_q$ is minimized (not considering `(', `)' and commas, of course).
\end{problem}

%For example, the input $ 00\$_1 10\$_2 10\$_3$ has a block for $0"$ that contains vertices with labels $0$, $1$ and $1$, so that corresponds to a tuple $(0,1)_{``0"}$ (no repetition in tuple). We may then choose an ordering of the characters within each tuple by choosing the corresponding ordering on $\$_1, \hdots \$_k$. Thus, DO can be solved by finding an ordering on the characters in each tuple so as to minimize the runs created by the concatenation of the tuples. 
%and $f_{DO}$ the reduction from Delimiter Ordering to Tuple Ordering and $g_{TO}$ be the transformation from Tuple Ordering to Delimiter Ordering which orders the vertex in each block accordingly. 

% \begin{proof}
% Let $|t_i|$ denote the total number of character in tuple $t_i$ and let $\chi_{i, i-1}$ be 1 if the first character in $t_i$ matches the last character in $t_{i-1}$. The total number of runs is then $$|t_1| + \sum_{i=2}^\ell (|t_i| - \chi_{i,i-1}) = \sum_{i=1}^\ell|t_i| - \sum_{i=2}^\ell\chi_{i,i-1} = \sum_{i=1}^\ell|t_i| - \text{ number of adjacent matches}$$
% \end{proof}

% \begin{wrapfigure}[]{r}{0.5\textwidth}
% %\vspace{1em}
% \includegraphics[width =
% 0.5\textwidth]{bwt_tuple_graph.png}
% \caption{The tuple graph for the tuples $(0,1,2)(0,1)(2)(1,2,3)$ }
% \label{fig:tuple_graph}
% \vspace{-1em}
% \end{wrapfigure}
Clearly, the problem of minimizing the total number of runs in $BWT(T)$ is equivalent to maximizing the total number of adjacent matches between the tuples formed as above.
We begin by first sorting the values within each tuple. This can easily be done in linear time if the  values within each tuple are bounded. For each tuple $t_i$ we will create two lists, $L_i$ and $R_i$, both of which contain the elements of that tuple in sorted order. 
\begin{itemize}
    \item We start by marking all elements in $L_1$ as available.
    \item Next, for $i \geq 1$, assume inductively that at least one element in $L_i$ is marked as available.
    \item If $|L_i| \geq 2$:
    \begin{itemize}
        \item If two elements in $L_i$ are marked then we mark all elements in $R_i$ as available.
        \item Else if only one element in $L_i$ is available, we mark all but the matching element in $R_i$ as available.
    \end{itemize}
    \item Else if $|L_i| = 1$, mark the one element in $R_i$ as available too.
    \item To mark the availability of elements in $L_{i+1}$ given the availability of elements in $R_i$ we mark an element in $L_{i+1}$ iff its matching value in $R_i$ is marked. If this results in none of the $L_{i+1}$ being marked then we
    %then we record the start of a new section and 
    make all $L_{i+1}$ marked as available.
\end{itemize}

% For any $R_i$ where $i \geq 1$ if $|L_i| > 1$ and two elements in $L_i$ are marked then we mark all elements in $R_i$ as available as well. If only one element in $L_i$ is available, we mark all but the corresponding vertex in $R_i$ as available. If $|L_i| = 1$ and that element is marked as available, we mark the one element in $R_i$ as available too.
% To mark the availability of elements in $L_{i+1}$ given the availability of elements in $R_i$ we mark an element in $L_{i+1}$ iff its matching value in $R_i$ is marked. If this results in none of the $L_i$ being marked then we record the start of a new path and make all $L_i$ marked as available.
Once marking is completed for each tuple $t_i$, we take some marked element from $L_i$  as the left most element of that tuple, and some marked element marked in $R_i$ as the rightmost element of that tuple. A simple exchange argument shows that this greedy algorithm returns an optimal ordering of the tuples to maximize adjacent matches.

\textbf{Time Complexity of CAO Algorithm:} 
%The whole algorithm in this section  is summarized by the pseudo-code in Algorithm \ref{alg:DO} and Algorithm \ref{alg:TO}. 
% By Lemma \ref{lem:block_child} there are at most $d$ tuples consisting of more than one element. Each of these contributes $O(\sigma^2)$ edges to the tuple graph. The remaining tuples contribute at most three edges. Since there are at most $nd$ blocks in total, the total number of edges and vertices in the graph is $O(\sigma^2 d + nd)$. The tuple graph is a DAG so finding these longest paths can be done in linear time.
Creating the blocks/tuples can be done in linear time.
The total number of elements across all tuples is $O(N)$, where $N$ is $|T_0| + \hdots |T_{d-1}|$. Moreover, the algorithm for finding an optimal ordering of elements in the tuples runs in time proportional to the total number of elements, which is  $O(N)$.

\textbf{An Example:}
We will now show an example where the special symbol ordering greatly reduces the number of runs in the BWT. Let $d$ be the number of strings and $n$ the length of the strings. It is possible for a set of special symbols to be ordered such that the number of runs is $\Omega(nd)$. Let $n = \log_\sigma d$ and consider the $\sigma^n$ distinct binary strings concatenated with special symbols in lexicographic order. For example, with $n = 3$ we would have $T = 000\$_0 001\$_1 010 \$_2 011 \$_3 100 \$_4 101 \$_5 110 \$_6 111 \$_7$ with $\$_0 < \$_1 \hdots < \$_7$. The string $BWT(T)$ alternates between the $\$$'s, 0's, and 1's yielding $\Omega(nd)$ runs.
%To get some intuition for the ideas in this section, lay out the paths %\textcolor{blue}{[should the delimiters be at the end of the paths?]} 
%$0 \rightarrow 0 \rightarrow 0\rightarrow \$_7$, $1 \rightarrow 0 \rightarrow 0 \rightarrow \$_0$, etc, and place them in the tree structure in Figure \ref{fig:block_tree} where we call each circle in this figure a block. With the paths starting at the root block ordered $\$_0$, $\$_1,$ etc, we would see an alternation of 0's and 1's in each block. In fact, 
%In this ordering the only possible places where runs are not created are on the boundaries between blocks, giving the number of runs as at least $\Omega(nd)$.
%$\sum_{i=1}^{n}(d-\sigma^{i-1}) \geq nd - d = \Omega(nd)$. %Then, at the bottom level of the tree, $k$ runs must be created by delimiters giving a total of at least $nd$ runs. 
On the other hand, for this same case, arranging the $\$$'s in the optimal ordering allows for at most two runs per block giving $O(d)$ runs in total.

\section{Discussion and Open Problems}
\label{sec:open_problems}
Apart from the obvious open question of whether we can develop a better approximation for AO (from a lower bounds and upper bounds perspective), the following problem, a generalization of the Constrained Alphabet Ordering has unknown computational complexity. 

\begin{problem}[Constrained Alphabet Ordering with arbitrary placement]
% Given a set of strings $T_0, \hdots, T_{d-1}$ all of length at most $n$, and all appended with the new symbol $\$_i$ ($0\leq i \leq d-1$), find a ordering $\pi$ on the symbols $\$_i$ such that $\pi$ is compatible with the natural ordering given on ${0, \hdots, \sigma-1}$, and the number of runs in $BWT(T_0\$_0T_1\$_1, \hdots T_{d-1}\$_{d-1})$ is minimized.
 Given a string $T$ and a set of symbols $\$_i$ $(0\leq i \leq d-1)$ each occurring at most once in $T$ find a ordering $\pi$ on the symbols $\$_i$ such that $\pi$ is compatible with the natural ordering given on ${0, \hdots, \sigma-1}$, and the number of runs in $BWT(T_0\$_0T_1\$_1, \hdots T_{d-1}\$_{d-1})$ is minimized.
\end{problem}
%By $\pi$ being compatible we mean that the total order $\pi$ on $\Sigma \cup \{\$_i : 1 \leq i \leq k\}$ is compatible with the natural partial order $0 \leq 1 \leq \hdots \leq \sigma-1$. 

Like CAO, the problem seems to be of more significance when the number of strings is far more than the length of strings $n$. 
The constraints placed on the symbol ordering in this problem lie somewhere between the constraints in AO and CAO. 
%Letting $\rho(T_{\pi^*_{CAO}})$ be the number of runs in $T = T_0\$_0 \hdots T_{d-1}\$_{d-1}$ in an optimal CAO ordering, $\rho(T_{\pi^*_{CAO^\dagger}})$ the number of runs in an optimal CAO with arbitrary placement ordering, and $\rho(T_{\pi^*_{AO}})$ the number of runs in an optimal alphabet ordering, we have $\rho(T_{\pi^*_{AO}}) \leq \rho(T_{\pi^*_{CAO^\dagger}}) \leq \rho(T_{\pi^*_{CAO}})$. 
Likewise, for a given set of strings the optimal number of runs for CAO with arbitrary placement lies somewhere between the optimal number under AO and the optimal number under CAO.
We showed here AO is NP-hard while CAO is solvable in polynomial time. The key element in the reductions used for AO is having multiple strings containing the same symbol $a$, where $a$'s order needs to be determined in the alphabet. This property no longer holds in CAO with arbitrary placement, but still no polynomial time algorithm is evident, making the question of this problem's computational complexity intriguing. 

\bibliography{references}
\section{Appendix}
\appendix
% \section{Extracting Delimiter Ordering from Tuples and Tree Structure}
% \label{appendix:extract_order}
% \begin{wrapfigure}[]{r}{0.4\textwidth}
% \includegraphics[width = 0.4\textwidth]{lipics-v2018-authors/extract_order.png}
% \caption{We need to extract a path order (delimiter order) from this tree $T$, whose vertices have ordered tuples and lists that tell which paths `reside' in each element of a tuple.}
% \label{fig:extract_order}
% %\vspace{-1em}
% \end{wrapfigure}
% As input to this problem there is a set of ordered tuples and a tree structure $T$ as in Figure \ref{fig:extract_order}. We also assume that we can access which of the $d$ strings belong to which tuple as needed. That is, for every vertex in the tree structure we known which of the $d$ paths, or $d$ delimiters, are in the grouping corresponding to each element of the tuple. 

%The ordering of the tuple for each vertex $v$ of the tree refines the partial order given by $v$ parent's tuple. We get a total order in $O(nd)$ time by constructing a new ordered tree $T_{PO}$ which captures the current knowledge of the partial ordering, always has $d$ leaves, and gets recursively refined with more internal nodes as we explore $T$. The final ordering will be read with a linear scan of $T_{PO}$ leaves.  

\section{Wheeler Graphs}
\label{appendix:wheeler}
Recently, a new line of research into a class of graphs called Wheeler graphs
with properties related to the LF-mapping has been initiated~\cite{DBLP:conf/soda/AlankoDPP20,DBLP:journals/tcs/GagieMS17,GibneyT19}. 
These graphs can easily model the BWT, de-Brujin graphs, variation graphs, etc. Wheeler graphs admit a space efficient index which allows for matching patterns in optimal time. 
%Since their introduction, the research on Wheeler graphs has continued to be expanded upon and combined with other techniques in 
%We first define a Wheeler graph.
Let $(u,v,k)$ denote the directed edge from $u$ to $v$ with label $k$.
A Wheeler graph is a directed graph with edge labels where there exists an ordering $\phi$ on the vertices such that for any two edges $(u,v,k)$ and $(u',v',k')$: (i) $k < k' \implies v <_\phi v'$ and (ii) $(k = k')\land(u <_\phi u') \implies v \leq_\phi v'$ and 
% \begin{enumerate}
%     \item $k < k' \implies v <_\phi v'$; 
%     \item $(k = k')\land(u <_\phi u') \implies v \leq_\phi v'$
% \end{enumerate}
and vertices with in-degree zero must be placed first in the ordering. We consider an ordering which satisfies Conditions (i) and (ii) to be a \emph{proper ordering}.

%Wheeler graphs need not be acyclic, but the same properties used in this paper to allow for the LF mapping on the directed paths with edge labels still hold for Wheeler graphs. 
When the vertices are in a proper ordering, if we lay the vertices of a Wheeler graph $G$ out as though they were in the LF-mapping (see Figure \ref{fig:LF_map}) we would see that no edges with the same label cross. Also, we would see that all edges with the same label get mapped into the same portion of $F$ in the LF-mapping. The differences between Wheeler graphs and previous case lies in the fact that now vertices can map edges to multiple places simultaneously. 

\begin{figure}[h]
    \centering
    \includegraphics[width=0.8\textwidth]{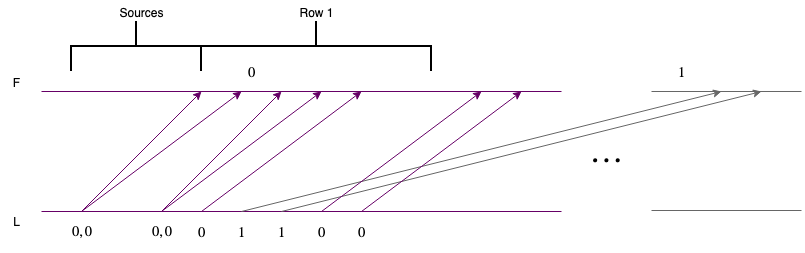}
    \caption{Reduction from CO\tdagger\ to SO. We use sources to bind together paths needed for each column.}
    \label{fig:SO}
\end{figure}

Using the natural definition of $BWT(G)$ presented in Section \ref{sec:AO}, the more general properties of Wheeler graphs makes the reduction from CO easier. Instead of using extra symbols to bind together paths induced by the same column of modified incidence matrix, we simply bind them together at the same source. 
%There is no more loop-back necessary as well. 
To be more precise, given the modified matrix $M$ the input to the problem CO, construct a graph as follows:
For $1\leq j \leq n$, $0 \leq i \leq m + 2\ell $, if $i = 0$ simply create a new source vertex $s_j$, 
%if $1 \leq i < m + 2\ell +1$ 
and $M_{i,j} = 1$ we construct a directed path starting at $s_j$ with edge labels $0^{i+1}1$, and if $i = m + 2\ell+1$ we construct the directed path with labels $0^{i+1}$ again rooted at $s_j$. See Figure \ref{fig:SO} for an illustration.
Note that this graph is a forest and hence a Wheeler graph.

% The alphabet size is now only two and the number of runs is exactly $\rho(L(M_\pi))$ which we abbreviate again as $\rho$. From this we get a similar inequality to that given in the proof of Lemma \ref{lem:CO_to_AO}, only now we can directly subtract $\delta \sigma$, rather than $(1+\delta)$. That is,  for $\delta > 0$,  $0 <c < 3\delta /2$, we can find $h$ large enough that
% $$
% \rho^* - \delta\sigma = \rho^* - \delta(2n + 2m + 6m^h) \leq c(\rho^* - 4(m+m^h) +2n).
% $$
% This shows the first condition for an L-reduction is met. The second condition is in the same way as in Section \ref{sec:AO} as well,
% \begin{align*}
%     |\OPT_{CO^\dagger}(x) - c_{CO^\dagger}(g_{SO_k}(y'))| 
%     &= |\frac{1}{2}(\rho(L(M_{\pi^*})) - 8m + 2n) - \frac{1}{2}(\rho(L(M_{\pi})) - 8m + 2n)|\\
%     &= \frac{1}{2}|\rho(L(M_{\pi^*})) - \rho(L(M_\pi))|\\ 
%     & \leq |\rho(L(M_{\pi^*}) -\delta\sigma - (\rho(L(M_{\pi})-\delta\sigma)|\\ &= |\rho(L(M_{\pi^*}) - \rho(L(M_{\pi})| = |\OPT_{SO_k}(f_{CO^\dagger}(x)) - c_{SO_k}(y')|.
% \end{align*}

\end{document}